\documentclass[11pt]{article}
\usepackage{RR}

\usepackage[ruled]{algorithm}
\usepackage{algpseudocode}
\usepackage{graphicx}
\usepackage{amssymb}
\usepackage[isolatin]{inputenc}
\usepackage[OT1]{fontenc}   
\usepackage{dsfont}
\usepackage{epsfig}
\usepackage{url}
\usepackage{ifthen}
\usepackage{latexsym}
\usepackage{array}
\usepackage{float}

\newtheorem{definition}{Definition}
\newtheorem{theorem}{Theorem}
\newtheorem{lemma}{Lemma}

\newtheorem{corollary}{Corollary}

\newenvironment{proof}{{\bf Proof. } }{{\hfill $\Box$}\vspace{.5pc}}

\newcommand{\boldsymbol}[1]{\mbox{\boldmath $#1$}}

\newcommand{\Rem}[1]{\qquad $\slash *$ {\em #1} $* \slash$}

\alglanguage{pseudocode}

\newcommand{\MOVE}{\textbf{Move} }
\newcommand{\TRY}{\textbf{Try to move} }

\RRdate{February 2009}
\RRtitle{Exploration d'Anneau Probabiliste Optimale par des Robots Asynchrones et Amnésiques}
\RRetitle{Optimal Probabilistic Ring Exploration by Asynchronous Oblivious Robots}
\titlehead{Optimal Probabilistic Ring Exploration by Asynchronous Oblivious Robots}
\RRauthor{St\'{e}phane Devismes$^\circ$ \and Franck Petit$^\dag$ \and S\'{e}bastien Tixeuil$^{\star,\ddag}$\\
{\small
$^\circ$ VERIMAG UMR 5104, Universit\'e Joseph Fourier, Grenoble (France)\\
$^\dag$ INRIA, LIP UMR 5668, Université de Lyon / ENS Lyon (France)\\
$^\star$ Universit\'{e} Pierre et Marie Curie - Paris 6, LIP6, France\\
$^\ddag$ INRIA project-team Grand Large
}
}

\authorhead{S. Devismes \emph{et al.}}

\RRresume{Nous considérons une équipe de $k$ robots identiques, amnésiques, asynchrones et mobiles qui sont capables de percevoir leur environnement mais incapables de communiquer, et évoluent sur des circuits contraints. Les résultats précédents qui utilisent le même scénario montrent que la symmétrie initiale potentielle induit des bornes inférieures élevées dès lors que le problème doit être résolu par des robots déterministes.

Dans cet article, nous initions la recherche sur les bornes et sur les solutions probabilistes dans le même contexte, et nous considérons le problème de l'exploration d'anneaux anonymes et non orientés de taille quelconque. Il est connu que $\Theta(\log n)$ robots sont nécessaires et suffisants dans le cas déterministe pour résoudre le problème avec $k$ robots, tant que $k$ et $n$ sont premiers entre eux. En contrepartie, nous montrons que quatre robots identiques probabilistes sont nécessaires et suffisants pour résoudre le même problème, tout en supprimant la contrainte de coprimalité. Nos résultats positifs sont constructifs.
}

\RRabstract{We consider a team of $k$ identical, oblivious, asynchronous mobile robots that are able to sense (\emph{i.e.}, view) their environment, yet are unable to communicate, and evolve on a constrained path. Previous results in this weak scenario show that initial symmetry yields high lower bounds when problems are to be solved by \emph{deterministic} robots.

In this paper, we initiate research on probabilistic bounds and solutions in this context, and focus on the \emph{exploration} problem of anonymous unoriented rings of any size. It is known that $\Theta(\log n)$ robots are necessary and sufficient to solve the problem with $k$ deterministic robots, provided that $k$ and $n$ are coprime. By contrast, we show that \emph{four} identical probabilistic robots are necessary and sufficient to solve the same problem, also removing the coprime constraint. Our positive results are constructive.
}

\RRmotcle{Robots, Anonymat, Amnésie, Exploration
}
\RRkeyword{Robots, Anonimity, Obliviousness, Exploration
}
\RRprojet{Grand Large}
\RRtheme{\THNum}
\URFuturs

\begin{document}
\RRNo{6838}
\makeRR

\section{Introduction}

We consider autonomous robots that are endowed with visibility sensors (but that are otherwise unable to communicate) and motion actuators. Those robots must collaborate to solve a collective task, namely \emph{exploration}, despite being limited with respect to input from the environment, asymetry, memory, etc. In this context, the exploration tasks requires every possible location to be visited by at least one robot, with the additional constraint that all robots stop moving after task completion.

Robots operate in \emph{cycles} that comprise \emph{look}, \emph{compute}, and \emph{move} phases. The look phase consists in taking a snapshot of the other robots positions using its visibility sensors. In the compute phase a robot computes a target destination based on the previous observation. The move phase simply consists in moving toward the computed destination using motion actuators. 

The robots that we consider here have weak capacities: they are \emph{anonymous} (they execute the same protocol and have no mean to distinguish themselves from the others), \emph{oblivious} (they have no memory that is persistent between two cycles), and have no compass whatsoever (they are unable to agree on a common direction or orientation).

\paragraph{Related works}

The vast majority of literature on coordinated distributed robots considers that those robots are evolving in a \emph{continuous} two-dimentional Euclidian space and use visual sensors with perfect accuracy that permit to locate other robots with infinite precision~\cite{AFSY08c,SDY09j,SY99j,FPSW08j,DLP08j,DP09j}.

Several works investigate restricting the capabilities of both visibility sensors and motion actuators of the robots, in order to circumvent the many impossibility results that appear in the general continuous model. In~\cite{AOSY99j,FPSW05j}, robots visibility sensors are supposed to be accurate within a constant range, and sense nothing beyond this range. In~\cite{FPSW05j,BGT09c}, the space allowed for the motion actuator was reduced to a one-dimentional continuous one: a ring in~\cite{FPSW05j}, an infinite path in~\cite{BGT09c}.  
 
A recent trend was to shift from the classical continuous model to the \emph{discrete} model. In the discrete model, space is partitioned into a \emph{finite} number of locations. This setting is conveniently represented by a graph, where nodes represent locations that can be sensed, and where edges represent the possibility for a robot to move from one location to the other. Thus, the discrete model restricts both sensing and actuating capabilities of every robot. For each location, a robot is able to sense if the location is empty of if robots are positioned on it (instead of sensing the exact position of a robot). Also, a robot is not able to move from a position to another unless there is explicit indication to do so (\emph{i.e.}, the two locations are connected by an edge in the representing graph). The discrete model permits to simplify many robot protocols by reasoning on finite structures (\emph{i.e.}, graphs) rather than on infinite ones. However, as noted in most related papers~\cite{KMP08j,KKN08c,FIPS07c,FIPS08c}, this simplicity comes with the cost of extra symmetry possibilities, especially when the authorized paths are also symmetric (indeed, techniques to break formation such as those of~\cite{DLP08j} cannot be used in the discrete model). 

The two main problems that have been studied in the discrete robot model are gathering~\cite{KMP08j,KKN08c} and exploration~\cite{FIPS07c,FIPS08c}. For gathering, both breaking symmetry~\cite{KMP08j} and preserving symmetry are meaningful approaches. For exploration, the fact that robots need to stop after exploring all locations requires robots to ``remember'' how much of the graph was explored, \emph{i.e.}, be able to distinguish between various stages of the exploration process since robots have no persistent memory. As configurations can be distinguished only by robot positions, the main complexity measure is then the number of robots that are needed to explore a given graph. The vast number of symmetric situations induces a large number of required robots. For tree networks, \cite{FIPS08c} shows that $\Omega(n)$ robots are necessary for most $n$-sized tree, and that sublinear robot complexity is possible only if the maximum degree of the tree is $3$. In uniform rings, \cite{FIPS07c} proves that the necessary and sufficient number of robots is $\Theta(\log n)$, although it is required that the number $k$ of robots and the size  $n$ of the ring are coprime. Note that all previous approaches in the discrete model are \emph{deterministic}, \emph{i.e.}, if a robot is presented twice the same situation, its behavior is the same in both cases.

\paragraph{Our contribution}
In this paper, we initiate research on \emph{probabilistic} bounds and solutions in the discrete robot model, and focus on the \emph{exploration} problem of anonymous unoriented rings of any size. 
By contrast with~\cite{FIPS07c} while in the same system setting, we show that \emph{four} identical probabilistic robots are necessary and sufficient to solve the same problem, also removing the coprime constraint between the number of robots and the size of the ring. Our negative result show that for any ring of size at least four, there cannot exist any protocol with three robots in our setting, even if they are allowed to make use of probabilistic primitives. Our positive results are constructive, as we present a randomized protocol with four robots for any ring of size more than eight.

\paragraph{Outline}
The remaining of the paper is divided as follows. Section~\ref{sec:model} presents the system model that we use throughout the paper. Section~\ref{sec:impossible} provides evidence that no three probabilistic robots can explore every ring, while Section~\ref{sec:possible} presents our protocol with four robots. Section~\ref{sec:conclusion} gives some concluding remarks.

\section{Model}
\label{sec:model}

\paragraph{Distributed System}

We consider systems of autonomous mobile entities called {\em agents} or {\em robots} 
evolving into a {\em graph}. We assume that the graph is a \emph{ring} of $n$ nodes, 
$u_0$,\dots, $u_{n-1}$, {\em i.e.}, $u_i$ is connected
to both $u_{i-1}$ and $u_{i+1}$ --- every computation over indices is assumed to be 
modulus $n$. The indices are used for notation purposes only: the nodes are {\em anonymous} 
and the ring is {\em unoriented}, {\em i.e.}, given two neighboring nodes $u$, $v$, there is
no kind of explicit or implicit labelling allowing to determine whether $u$ is on the right
or on the left of $v$. Operating in the ring are $k \leq n$ anonymous robots. 

A \emph{protocol} is a collection of $k$ \emph{programs}, one operating on each robot. 
The program of a robot consists in executing {\em Look\mbox{-}Compute\mbox{-}Move cycles} infinitely many times. 
That is, the robot first observes its environment (Look phase).
Based on its observation, a robot then (probabilistically or deterministically) decides --- according to its program --- to move or 
stay idle (Compute phase). 
When an robot decides a move, it moves to its destination during the Move phase. 

The robots do not communicate in an explicit way; however they see the position of 
the other robots and can acquire knowledge from this information. 
We assume that the robots cannot remember any previous observation nor computation performed 
in any previous step.  Such robots are said to be \emph{oblivious} (or \emph{memoryless}). 
The robots are also \emph{uniform} and \emph{anonymous}, i.e, they 
all have the same program using no local parameter (such that an identity) 
allowing to differentiate any of them.

\paragraph{Computations}

Time is represented by an infinite sequence of instants 0, 1, 2, \dots\  
At every instant $t \geq 0$, a non-empty subset of robots is activated to execute a cycle. 
The execution of each cycle is assumed to be \emph{atomic}: every robot that is 
activated at instant $t$ instantaneously executes a full cycle between $t$ and $t+1$.
Atomicity guarantees that at any instant the robots are on some nodes of 
the ring but not on edges. Hence, during a Look phase, a robot sees no robot on edges. 

We assume that during the Look phase, every robot can perceive 
whether several robots are located on the same node or not.  This ability is called \emph{Multiplicity Detection}.
We shall indicate by $d_i(t)$ the multiplicity of robots present in node $u_i$ at instant $t$.
More precisely $d_i(t) = j$ indicates that there is $j$ robots in node $u_i$ at instant $t$. If $d_i(t) \geq 2$, then we say that there is a {\em tower} in $u_i$ at instant $t$ (or simply there is a {\em tower} in $u_i$ when it is clear from the context). We say a node $u_i$ is {\em free at instant $t$} (or simply {\em free} when it is clear from the context) if $d_i(t) = 0$. Conversely, we say that $u_i$ is {\em occupied at instant $t$} (or simply {\em occupied} when it is clear from the context) if $d_i(t) \neq 0$.

Given an arbitrary orientation of the ring and a node $u_i$, $\gamma^{+i}(t)$ (respectively, $\gamma^{-i}(t)$) denotes 
the sequence $\langle d_i(t)d_{i+1}(t)\dots d_{i+n-1}(t)\rangle$
(resp., $\langle d_i(t)d_{i-1}(t)\dots d_{i-(n-1)}(t)\rangle$). 
The sequence $\gamma^{-i}(t)$ is called {\em mirror} of $\gamma^{+i}(t)$ and conversely.
Since the ring is unoriented, agreement on only one of the two sequences $\gamma^{+i}(t)$ and $\gamma^{-i}(t)$) is impossible. 
The (unordered) pair $\{\gamma^{+i}(t),\gamma^{-i}(t)\}$ is called the {\em view} of node $u_i$ at instant $t$ 
(we omit ``at instant $t$'' when it clear from the context). The view of $u_i$ is said to be {\em symmetric} if and only if 
$\gamma^{+i}(t) = \gamma^{-i}(t)$.  Otherwise, the view of $u_i$ is said to be {\em asymmetric}.

By convention, we state that the {\em configuration} of the system at instant $t$ is $\gamma^{+0}(t)$. 
Any configuration from which there is a probability 0 that a robot moves is said \emph{terminal}. 
Let $\gamma = \langle x_0x_1\dots x_{n-1}\rangle$ be a configuration. 
The configuration $\langle x_ix_{i+1}\dots x_{i+n-1}\rangle$ is obtained 
by rotating $\gamma$ of $i \in [0\dots n-1]$. 
Two configurations $\gamma$ and $\gamma^\prime$ are said {\em undistinguable} if and only if $\gamma^\prime$ can be obtained by rotating $\gamma$ or its mirror. Two configurations that are not undistinguable are said {\em distinguable}.
We designate by {\em initial configurations} the configurations 
from which the system can start at instant 0.

During the Look phase of some cycle, it may happen that both edges incident to a 
node $v$ currently occupied by the robot look identical in the snapshot, {\em i.e.}, 
$v$ lies on a symmetric axis of the configuration. In this case, if the robot decides to move, 
it may traverse any of the two edges. We assume the worst case decision in such cases, {\em i.e.}, 
that the decision to traverse one of these two edges is taken by an adversary.

We call \emph{computation} any infinite sequence of configurations $\gamma_0, \dots, \gamma_t, \gamma_{t+1}, \dots$ 
such that (1) $\gamma_0$ is a possible initial configuration and (2) 
for every instant $t \geq 0$, $\gamma_{t+1}$ is obtained from $\gamma_t$ after some 
robots (at least one) execute a cycle. Any transition $\gamma_t, \gamma_{t+1}$ is called a step of the 
computation. A computation $c$ \emph{terminates} if $c$ contains a terminal configuration. 

A \emph{scheduler} is a predicate on computations, that is, a scheduler define a
set of \emph{admissible} computations, such that every computation in this set satisfies
the scheduler predicate. Here we assume a \emph{distributed fair} scheduler. Distributed means that, at every 
instant, any non-empty subset of robots can be activated.
Fair means that every robot is activated infinitively often during a computation. A particular case of distributed fair scheduler is the \emph{sequential} fair scheduler: at every instant, one robot is activated and  every robot is activated infinitively often during a computation. In the following, we call sequential computation any computation that satisfies the sequential fair scheduler predicate. 

\paragraph{Problem to be solved}

We consider the {\em exploration} problem, where $k$ robots collectively explore a $n$-sized ring before stopping moving forever.
More formally, a protocol $\mathcal P$ \emph{deterministically} (resp. \emph{probabilistically}) solves the exploration problem if and only if every computation $c$ of $\mathcal P$ starting from a \emph{towerless configuration} satisfies:
\begin{enumerate}
\item $c$ terminates in \emph{finite time} (resp. with \emph{expected finite time}).
\item Every node is visited by at least one robot during $c$. 
\end{enumerate}

The previous definition implies that every initial configuration of the system in the problem we consider is {\em towerless}.

Using probabilistic solutions, termination is not certain, however the overall probability of non-terminating computations is 0.

\section{Negative Result}
\label{sec:impossible}

In this section, we show that the exploration problem is impossible to solve in our settings ({\em i.e.}, oblivious robots, anonymous ring, distributed scheduler, \dots) if there is less than four robots,  even in a probabilistic manner (Corollary \ref{cor:mesfesses}). The proof is made in two steps: 
\begin{itemize}
\item The first step is based on the fact that obliviousness constraints any exploration protocol to construct an implicit memory using the configurations. We show that if the scheduler behaves sequentially, then in any case except one, it is not possible to particularize enough configurations to memorize which nodes have been visited (Theorem \ref{theo:1} and Lemma \ref{lem:3n4}). 
\item The second step consists in excluding the last case (Theorem \ref{theo:2}).
\end{itemize}

\bigskip

\noindent Lemmas \ref{lem:nk1} to \ref{lem:carac} proven below are technical results that lead to Corollary \ref{coro}. The latter exhibits the minimal size of a subset of particular configurations required to solve the exploration problem.

\begin{definition}[MRP] Let $s$ be a sequence of configurations. The \emph{minimal relevant prefix} of $s$, noted ${\cal MRP}(s)$, is the maximal subsequence of $s$ where no two consecutive configurations are identical.
\end{definition}

\begin{lemma}\label{lem:nk1}Let ${\cal P}$ be any (probabilistic or deterministic) exploration protocol for $k$ robots in a ring of $n$ nodes. For every sequential computation $c$ of ${\cal P}$ that terminates, we have $|{\cal MRP}(c)| \geq n-k+1$. 
\end{lemma}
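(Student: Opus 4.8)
The plan is to reinterpret $|{\cal MRP}(c)|$ as one plus the number of steps of $c$ that actually change the configuration, and then to bound that number from below by counting first visits to the nodes that are unoccupied initially. First I would observe that in a \emph{sequential} computation exactly one robot is activated at each instant. If that robot moves, it traverses a single edge, so precisely the two entries of the configuration sequence corresponding to its source and destination nodes are modified (one decremented, one incremented); hence the resulting configuration is never literally identical to the previous one. Conversely, if the activated robot stays idle, the configuration is unchanged. Therefore, collapsing consecutive duplicates to form ${\cal MRP}(c)$ keeps exactly the initial configuration together with one configuration per \emph{effective} move, so $|{\cal MRP}(c)| = 1 + m$, where $m$ is the number of configuration-changing steps occurring before $c$ reaches its terminal configuration (the tail of $c$ is constant once that configuration is reached, so $m$ and hence $|{\cal MRP}(c)|$ are finite).

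Next I would lower-bound $m$ using the exploration requirement. Since every initial configuration is towerless, the $k$ robots occupy $k$ distinct nodes at instant $0$, so exactly $k$ nodes are visited at that instant. Because $c$ solves exploration, all $n$ nodes are eventually visited, so at least $n-k$ nodes receive their \emph{first} visit at some instant $t \geq 1$. For each such node, the step that first places a robot on it is an effective (configuration-changing) move. Moreover, since each step activates a single robot reaching a single destination node, at most one node can be visited for the first time per step; consequently the map sending each newly visited node to the step realizing its first visit is injective, which gives $m \geq n-k$.

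Combining the two observations yields $|{\cal MRP}(c)| = 1 + m \geq 1 + (n-k) = n-k+1$, as required. The step I expect to be the delicate one is the injectivity argument of the second paragraph, namely that distinct newly visited nodes must be charged to distinct steps: this relies crucially on the sequential scheduler, because one robot traverses one edge per step and can therefore introduce at most one new node at a time. Under an arbitrary distributed scheduler several first visits could occur simultaneously in a single step, which is precisely why the statement is restricted to sequential computations, and I would be careful to invoke the sequential hypothesis explicitly at this point.
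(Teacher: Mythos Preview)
Your argument is correct and follows essentially the same approach as the paper's proof: count the $k$ initially visited nodes, observe that the remaining $n-k$ nodes must each be first visited by some configuration-changing step of the sequential computation, and conclude that ${\cal MRP}(c)$ has at least $n-k+1$ entries. Your version is in fact more carefully worked out than the paper's, since you make explicit both that every actual move in a sequential step changes the configuration (so $|{\cal MRP}(c)| = 1+m$) and that distinct first visits correspond to distinct steps, whereas the paper leaves these points implicit.
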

\begin{proof}
Let $c$ be a sequential computation that terminates. In the initial configuration of $c$ exactly $k$ nodes are already visited because there is at most one robot in each node. So, $n-k$ nodes are dynamically visited before $c$ terminates. As the computation is sequential, the computation contains at least $n-k+1$ different configurations: the initial one plus one configuration per node to be dynamically visited. Hence, $|{\cal MRP}(c)| \geq n-k+1$.
\end{proof}

\begin{lemma}\label{lem:nk1wt}Let ${\cal P}$ be any (probabilistic or deterministic) exploration protocol for $k$ robots in a ring of $n>k$ nodes. For every sequential computation $c$ of ${\cal P}$ that terminates, ${\cal MRP}(c)$ has at least $n-k+1$ configurations containing a tower.
\end{lemma}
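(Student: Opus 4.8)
The plan is to strengthen Lemma~\ref{lem:nk1} by showing that almost all of the configurations it produces must carry a tower, the tower playing the role of the only persistent ``memory'' available to oblivious robots. I would fix a terminating sequential computation $c$ and recall from Lemma~\ref{lem:nk1} that, since the initial configuration is towerless, exactly $k$ nodes are visited at instant $0$ and the remaining $n-k$ nodes are visited one at a time, each fresh visit occurring at a distinct step of ${\cal MRP}(c)$. The goal is then to charge a distinct tower-bearing configuration to each of these $n-k$ dynamic visits, plus one extra configuration, and conclude that ${\cal MRP}(c)$ contains at least $n-k+1$ configurations with a tower.

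A clean first observation is that the \emph{terminal} configuration of $c$ must contain a tower. The specification requires ${\cal P}$ to be correct starting from \emph{every} towerless configuration, so any towerless configuration is a legitimate initial configuration. If the terminal configuration were towerless, it could be taken as an initial configuration; but from a terminal configuration no robot ever moves, so only its $k$ occupied nodes would ever be visited, leaving $n-k\ge 1$ nodes unexplored and contradicting the correctness of ${\cal P}$. Hence the last configuration of ${\cal MRP}(c)$ carries a tower.

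The heart of the argument, and the step I expect to be the main obstacle, is to show that a \emph{fresh} node cannot be visited while the configuration remains towerless. Locally such a step is possible: a single robot leaves its node $u$ and steps onto an adjacent free node $v$, visiting $v$ while keeping the configuration towerless. The claim is that a \emph{correct} protocol cannot make progress this way, and the intuition is the memory principle stated informally in the introduction to this section: the configuration obtained after such a move is itself a towerless, hence legitimate, initial configuration, in which \emph{nothing} records that $u$ (now empty) has already been visited. I would formalize this by exploiting obliviousness together with the distributed probabilistic scheduler: from the resulting towerless configuration the protocol behaves exactly as it would from scratch, so by splicing a from-scratch continuation one produces, with positive probability, a computation that either fails to terminate or re-explores indefinitely, contradicting termination in expected finite time. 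Consequently every fresh visit must be performed by a robot leaving a node that stays occupied, that is, while a tower is present; and forming the first such tower requires a preliminary \emph{merge} step (a robot joining an occupied node), which visits no new node and supplies the extra tower configuration.

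Putting the pieces together, the $n-k$ dynamic visits each occur at a configuration containing a tower, and the merge that creates the first tower (equivalently, the terminal configuration) contributes one further such configuration, all at distinct positions of ${\cal MRP}(c)$; hence ${\cal MRP}(c)$ contains at least $n-k+1$ configurations with a tower. The delicate point is entirely in the crux above: turning the slogan ``towerless configurations carry no memory'' into a rigorous impossibility that correctly accounts for the adversarial, probabilistic activation of the robots.
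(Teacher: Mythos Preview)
Your crux --- that in a correct protocol no fresh node can be visited by a towerless-to-towerless step --- is false, and the splicing you sketch does not produce a contradiction. The paper's own protocol is a counterexample: in Phase~$\mathrm{I}$ (Algorithm~\ref{algo:P1}), when the configuration contains a $3$-segment, the isolated robot moves one step toward it through a hole, visiting a fresh node while the configuration remains towerless; the protocol is nonetheless correct. More generally, if a robot steps from $u$ to a free neighbor $v$ and the result $\gamma'$ is towerless, then $\gamma'$ is indeed a legitimate initial configuration, but all this tells you is that the suffix of $c$ from $\gamma'$ must itself explore the whole ring --- which it does, since $c$ is a terminating correct computation. Replacing that suffix by a ``from-scratch'' continuation simply yields another correct terminating computation; nothing forces non-termination or an infinite cycle, and ``$u$ has been forgotten'' is harmless because the protocol is allowed to revisit $u$. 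The slogan that towerless configurations carry no memory is true, but it does not prevent a correct protocol from passing through arbitrarily many towerless configurations while making genuine progress.

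The paper's proof avoids this by a suffix argument rather than a step-by-step invariant. Let $\alpha$ be the \emph{last} towerless configuration in $c$ and let $c'$ be the suffix of $c$ starting at $\alpha$. Since $\alpha$ is towerless, $c'$ is itself an admissible terminating sequential computation, so Lemma~\ref{lem:nk1} gives $|{\cal MRP}(c')|\ge n-k+1$. Under the contradiction hypothesis ${\cal MRP}(c)$ has at most $n-k$ tower configurations; since every configuration of ${\cal MRP}(c')$ after $\alpha$ carries a tower, $|{\cal MRP}(c')|\le 1+(n-k)$, forcing equality. Hence ${\cal MRP}(c')$ has exactly $n-k$ nontrivial steps, and the first of them is a towerless-to-tower transition, which under a sequential scheduler means one robot moves onto an already occupied node and visits no new node. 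The remaining $n-k-1$ steps visit at most $n-k-1$ new nodes, so together with the $k$ nodes occupied at $\alpha$ at most $n-1$ nodes are ever visited in $c'$, contradicting correctness. Your correct observation that the terminal configuration must contain a tower is subsumed here: it is exactly what guarantees that $\alpha$ is not terminal and that the first nontrivial step of ${\cal MRP}(c')$ exists.
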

\begin{proof}
Assume, by the contradiction, that there is a sequential computation $c$ of ${\cal P}$ that terminates and such that ${\cal MRP}(c)$ has less than $n-k+1$ configurations containing a tower. 

There exists a suffix $c'$ of $c$ starting from a configuration $\alpha$ without tower followed a suffix $s$ that only contains configurations with a tower. As $\alpha$ is a configuration without tower, $c'$ is an admissible sequential computation of ${\cal P}$. Moreover, as $c$ terminates, $c'$ terminates too. Hence, $|{\cal MRP}(c')| = n-k+1$ by Lemma \ref{lem:nk1} and all robots must be visited before $c'$ reaches its terminal configuration. As a consequence, $c'$ contains exactly $n-k$ steps of the form $\beta\beta'$ with $\beta \neq \beta'$. Now, the first of these steps in $c'$ is a step where one robot moves to a node already occupied by another robot (remember that the computation is sequential and the first step in ${\cal MRP}(c')$ is a step from a configuration without tower to a configuration with a tower). Hence, $c'$ contains at most $n-k-1$ steps where a new node is visited: $c'$ terminates before all robots are visited, a contradiction.
\end{proof}

\begin{lemma}\label{lem:nk1wt-k}Let ${\cal P}$ be any (probabilistic or deterministic) exploration protocol for $k$ robots in a ring of $n>k$ nodes. For every sequential computation $c$ of ${\cal P}$ that terminates, ${\cal MRP}(c)$ has at least $n-k+1$ configurations containing a tower of less than $k$ robots.
\end{lemma}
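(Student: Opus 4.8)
The plan is to strengthen the previous lemma (Lemma \ref{lem:nk1wt}) by ruling out a tower of size exactly $k$, leaving only towers of size strictly less than $k$. Lemma \ref{lem:nk1wt} already guarantees at least $n-k+1$ configurations in ${\cal MRP}(c)$ that contain a tower; my goal is to show that the same bound holds when I further insist the tower has fewer than $k$ robots. The natural approach is to argue that configurations with a tower of size exactly $k$ (i.e., all robots stacked on one node) are so degenerate that they cannot contribute the relevant configurations needed to distinguish visited stages, and so discarding them does not lower the count below $n-k+1$.

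**First I would** examine what a configuration with a tower of size $k$ looks like: it is a single occupied node holding all $k$ robots, with every other node free. The key observation is that such a configuration is completely symmetric and essentially ``resets'' the positional information — it is indistinguishable under rotation and mirroring from any other single-tower-of-$k$ configuration, so at most one such configuration can appear in ${\cal MRP}(c)$ as a distinct configuration (and even then, from it the system has lost all the implicit memory encoded in the relative positions of robots). I would then argue that if a tower of size $k$ occurs at some point, the suffix of $c$ after this point is itself an admissible sequential computation starting from an (essentially towerless-up-to-one-node) degenerate configuration, to which I can apply the counting argument of Lemma \ref{lem:nk1wt}. The cleanest route is probably to suppose for contradiction that strictly fewer than $n-k+1$ configurations with a tower of size $<k$ appear, and to show this forces too few distinct intermediate configurations to visit all $n-k$ remaining nodes, mirroring the accounting in the proof of Lemma \ref{lem:nk1wt}.

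**The main obstacle** I anticipate is handling the transitions into and out of a size-$k$ tower correctly within the sequential model. Since the scheduler is sequential, forming a tower of size $k$ from a tower of size $k-1$ moves exactly one robot, and this step does not itself visit a new node; similarly, leaving a size-$k$ tower splits off at most one robot per step. I must verify that each step adjacent to a size-$k$ tower configuration either visits no new node or can be charged against a configuration with a tower of size $<k$, so that the $n-k$ ``new-node'' steps are all witnessed by configurations with towers of size strictly less than $k$. The delicate bookkeeping is ensuring no new-node visit is ``hidden'' inside a size-$k$ tower configuration — but since a size-$k$ tower occupies only one node, any step reaching a genuinely new node must leave or precede that configuration, so each such step is accounted for by a sub-$k$ tower configuration, yielding the claimed $n-k+1$ bound.
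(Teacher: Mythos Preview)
Your central observation---that a sequential step whose target configuration is a size-$k$ tower cannot visit a new node, because the moving robot necessarily landed on an already occupied node---is exactly the right one, and it is also the heart of the paper's argument. However, the charging scheme you build on it has a real gap. You assert that every new-node step ``is accounted for by a sub-$k$ tower configuration,'' but nothing in your outline rules out steps between two \emph{towerless} configurations: a single robot moving to a free neighbour while the other robots are scattered visits a new node, yet neither endpoint contains any tower at all. So new-node visits are not automatically witnessed by tower configurations, and your count can undercount badly.

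The paper closes this gap in the same way as in the proof of Lemma~\ref{lem:nk1wt}: it first passes to the suffix $c'$ of $c$ that starts at the \emph{last} towerless configuration. This $c'$ is an admissible sequential terminating computation (it begins towerless), and by construction every configuration of ${\cal MRP}(c')$ after the first one contains a tower. Now every step of ${\cal MRP}(c')$ ends in a tower configuration; steps ending in a $k$-tower visit no new node (your observation), and the very first step (towerless $\to$ tower) also visits no new node. Hence all $n-k$ dynamic visits must occur in steps ending in sub-$k$-tower configurations other than the first, forcing at least $n-k+1$ such configurations and giving the contradiction. Two further remarks: your alternative of taking the suffix \emph{after a size-$k$ tower} does not work, since such a suffix does not start from a towerless configuration and is therefore not an admissible computation to which the earlier lemmas apply; and your claim that ``at most one'' size-$k$-tower configuration can appear in ${\cal MRP}(c)$ confuses equality of configurations with indistinguishability up to rotation---${\cal MRP}$ only collapses \emph{consecutive identical} configurations, so up to $n$ distinct size-$k$-tower configurations (one per node) could in principle appear. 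That aside is in any case unnecessary once the correct suffix is taken.
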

\begin{proof}
Assume, by the contradiction, that there is a sequential computation $c$ of ${\cal P}$ that terminates and such that ${\cal MRP}(c)$ has less than $n-k+1$ configurations containing a tower of less than $k$ robots. 

There exists a suffix $c'$ of $c$ starting from a configuration $\alpha$ without tower followed a suffix $s$ that only contains configurations with a tower. As $\alpha$ is a configuration without tower, $c'$ is an admissible sequential computation of ${\cal P}$. Moreover, as $c$ terminates, $c'$ terminates too. Hence, ${\cal MRP}(c')$ is constituted of a configuration with no tower followed by at least $n-k+1$ configurations containing a tower by Lemma \ref{lem:nk1wt} and all robots must be visited before $c'$ reaches its terminal configuration. 

As the first configuration of $c'$ is without tower, for every configuration $\alpha$ of ${\cal MRP}(c')$ with a tower there exists a unique step in ${\cal MRP}(c')$ of the form $\alpha'\alpha$ with $\alpha' \neq \alpha$. Now, as $c'$ is sequential, for each of these steps, if $\alpha$ contains a tower of $k$ robots, then no new node is visited during $\alpha' \neq \alpha$. By contradiction assumption, there is less than $n-k+1$ of steps $\beta'\beta$ such that $\beta$ contains a tower of less than $k$ robots. Moreover, no node is visited during the first of these steps (remember that the computation is sequential and the first of these steps is a step from a configuration without tower to a configuration with a tower). Hence, less that $n+k$ steps allow to dynamically visit new nodes in $c'$ and, as $c'$ is sequential, $c'$ terminates before all robots are visited, a contradiction.
\end{proof}

\begin{lemma}\label{lem:carac}Let ${\cal P}$ be any (probabilistic or deterministic) exploration protocol for $k$ robots in a ring of $n>k$ nodes. For every sequential computation $c$ of ${\cal P}$ that terminates, ${\cal MRP}(c)$ has at least $n-k+1$ configurations containing a tower of less than $k$ robots and any two of them are distinguable.
\end{lemma}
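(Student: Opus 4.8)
The plan is to retain the counting part for free and argue only the new content. Lemma~\ref{lem:nk1wt-k} already gives that $\mathcal{MRP}(c)$ contains at least $n-k+1$ configurations with a tower of less than $k$ robots, so I would add the pairwise distinguishability as a separate claim proved by contradiction. Thus I would assume that two of these configurations, say $\gamma_a$ and $\gamma_b$ with $a<b$ in $\mathcal{MRP}(c)$, are undistinguable, and derive that $\mathcal{P}$ cannot in fact solve exploration on this ring, contradicting the standing hypothesis that $\mathcal{P}$ is an exploration protocol admitting the terminating computation $c$.

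The engine of the contradiction is the equivariance forced by the model: the robots are anonymous and oblivious and the ring is unoriented, so the behaviour of $\mathcal{P}$ commutes with the rotations and mirror symmetries of a configuration. Concretely, if $\sigma$ is the rotation (or rotation composed with mirror) witnessing $\gamma_b=\sigma(\gamma_a)$, then for any admissible continuation of $c$ starting at $\gamma_b$, its image under $\sigma^{-1}$ is an admissible continuation starting at $\gamma_a$, occurring with the same probability, since at each step the activated robots see exactly the same unordered views (Multiplicity Detection distinguishes nothing more). I would isolate this as a \emph{transport} property: admissible computation fragments can be pushed along $\sigma$.

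I would then split into two cases according to whether a new node is first visited strictly between $\gamma_a$ and $\gamma_b$ in $c$. If some node is, I would build the shortcut computation $c'$ obtained by following $c$ up to $\gamma_a$ and then appending the $\sigma^{-1}$-image of the suffix of $c$ that starts at $\gamma_b$; by transport, $c'$ is admissible and terminating, yet it skips the segment in which those nodes were first visited. Comparing the monotone visited-sets reached before $\gamma_a$ and after $\gamma_b$ with the action of the bijection $\sigma^{-1}$ on the node set, I would conclude that some node is never visited along $c'$, contradicting the second defining property of exploration. If no new node is visited between $\gamma_a$ and $\gamma_b$, the segment is progress-free, and I would instead use transport to design a \emph{fair} adversarial schedule that, whenever the system sits in a configuration undistinguable from $\gamma_a$, replays up to $\sigma$ the activations that carried $\gamma_a$ to $\gamma_b$ in $c$; since those activations already occurred in $c$, each replay recurs to such a configuration with positive probability, so with positive probability the run loops forever visiting no new node, contradicting that $\mathcal{P}$ terminates with probability one.

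The hard part will be the progress-free case, where two difficulties compound. First, the constructed schedule must remain \emph{fair} even though the replayed segment need not activate every robot, so the loop has to be interleaved with fresh activations without destroying the positive-probability recurrence back to a configuration undistinguable from $\gamma_a$. Second, the looping argument must be phrased at the level of measures, refuting termination \emph{with probability one} rather than merely exhibiting a single non-terminating run. The shortcut case is comparatively routine once the transport property is stated carefully; its only subtlety is bookkeeping the nodes that the transported suffix might incidentally re-cover, which is why I would choose the pair as tight as possible (for instance $b$ minimal such that $\gamma_b$ is undistinguable from an earlier tower-configuration) to keep those overlaps under control.
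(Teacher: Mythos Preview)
Your proposal takes a genuinely harder route than the paper, and that route has real gaps.

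The paper never argues about visited nodes or about non-termination in this lemma. It simply observes that if two undistinguishable tower-configurations $\gamma$, $\gamma'$ (each with a tower of fewer than $k$ robots) occur at times $t<t'$ in $c$, then the shortcut $c'=p\,s'$ --- prefix of $c$ up to time $t$, followed by the $\sigma^{-1}$-image of the suffix of $c$ from time $t'+1$ --- is again an admissible sequential \emph{terminating} computation of $\mathcal P$, and $\mathcal{MRP}(c')$ contains strictly fewer tower-configurations of less than $k$ robots, since the excised segment contained $\gamma'$. Iterating, one eventually manufactures a terminating sequential computation whose $\mathcal{MRP}$ has fewer than $n-k+1$ such configurations, contradicting Lemma~\ref{lem:nk1wt-k}. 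There is no case split, no measure-theoretic looping argument, and no direct bookkeeping of visited sets: all of that is already encapsulated in Lemma~\ref{lem:nk1wt-k}.

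Your plan instead tries to contradict the definition of exploration directly, and both branches are problematic. In your shortcut case, the claim that $c'$ necessarily misses a node does not follow from $V_a\subsetneq V_b$: the transported suffix visits $\sigma^{-1}(W)$ where $W$ is the set of nodes touched after time $b$ in $c$, and nothing you have said rules out $V_a\cup\sigma^{-1}(W)=[n]$; ``choosing the pair as tight as possible'' does not close this. In your progress-free case you correctly flag the two hard issues (maintaining fairness while looping, and refuting probability-one termination rather than exhibiting a single bad run), but neither is resolved, and the paper simply never needs this case. The clean fix is to drop both branches: after building the shortcut $c'$, apply Lemma~\ref{lem:nk1wt-k} to $c'$ instead of the definition of exploration.
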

\begin{proof}
Consider any sequential computation $c$ of ${\cal P}$ that terminates.

By Lemma \ref{lem:nk1wt-k}, ${\cal MRP}(c)$ has $x$ configurations containing a tower of less than $k$ robots where $x \geq n-k+1$.

We first show that (**) {\em if $c$ contains at least two different configurations that are undistinguable, then there exists a sequential computation $c'$ that terminates and such that ${\cal MRP}(c')$ has $x'$ configurations containing a tower of less than $k$ robots where $x' < x$.} Assume that there two different undistinguable configurations $\gamma$ and $\gamma'$ in $c$ having a tower of less than $k$ robots. Without loss of generality, assume that $\gamma$ occurs at time $t$ in $c$ and $\gamma'$ occurs at time $t'>t$ in $c$. Consider the two following case:
\begin{enumerate}
\item {\bf $\gamma'$ can be obtained by applying a rotation of $i$ to $\gamma$.} Let $p$ be the prefix of $c$ from instant $0$ to instant $t$. Let $s$ be the suffix of $c$ starting at instant $t'+1$. Let $s'$ be the sequence obtained by applying a rotation of $-i$ to the configurations of $s$. As the ring and the robots are anonymous, $ps'$ is an admissible sequential computation that terminates. Moreover, by construction ${\cal MRP}(ps')$ has $x'$ configurations containing a tower of less than $k$ robots where $x' < x$. Hence (**) is verified in this case.
\item {\bf $\gamma'$ can be obtained by applying a rotation of $i$ to the mirror of $\gamma$.} We can prove (**) in this case by slightly modifying the proof of the previous case: we have just to apply the rotation of $-i$ to the \emph{mirrors} of the configurations of $s$.
\end{enumerate}

By (**), if ${\cal MRP}(c)$ contains less than $n-k+1$ distinguable configurations with a tower of less than $k$ robots, it is possible to (recursively) construct an admissible computation $c'$ of ${\cal P}$ such that ${\cal MRP}(c')$ has less than $n-k+1$ configurations containing a tower of less than $k$ robots, a contradiction to Lemma \ref{lem:nk1wt-k}. Hence, the lemma holds.
\end{proof}

\noindent From Lemma \ref{lem:carac}, we can deduce the following corollary:

\begin{corollary}\label{coro}
Considering any (probabilistic or deterministic) exploration protocol for $k$ robots in a ring of $n>k$ nodes, there exists a subset ${\cal S}$ of at least $n-k+1$ configurations such that:
\begin{enumerate}
\item Any two different configurations in ${\cal S}$ are distinguable, and
\item In every configuration in ${\cal S}$, there is a tower of less than $k$ robots.
\end{enumerate}
\end{corollary}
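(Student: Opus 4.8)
The plan is to read off ${\cal S}$ directly from Lemma~\ref{lem:carac}, once I exhibit a single terminating sequential computation. First I would establish that such a computation exists. Since the sequential fair scheduler is a particular case of the distributed fair scheduler, ${\cal P}$ admits at least one sequential computation $c$ starting from a towerless configuration. In the deterministic case, the very definition of solving the exploration problem forces every computation, and in particular $c$, to terminate in finite time. In the probabilistic case, ${\cal P}$ terminates with expected finite time, so the set of non-terminating computations has probability zero; hence at least one terminating sequential computation is guaranteed to exist. Either way I may fix a terminating sequential computation $c$.

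Next I would simply apply Lemma~\ref{lem:carac} to this $c$. The lemma asserts that ${\cal MRP}(c)$ contains at least $n-k+1$ configurations, each carrying a tower of fewer than $k$ robots, and pairwise distinguable. I would then let ${\cal S}$ be exactly this collection. Since pairwise distinguability already rules out repetitions (identical configurations are trivially undistinguable), the elements are distinct, so $|{\cal S}| \geq n-k+1$, and clauses~(1) and~(2) of the corollary are nothing more than the two properties handed to me by the lemma.

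The only real subtlety --- and the step I expect to demand the most care --- is the existence of a terminating sequential computation in the \emph{probabilistic} setting, where termination is not certain but only almost sure. Here I would lean on the definition that a probabilistic protocol solves the problem with \emph{expected} finite termination time, which forces the probability of non-termination to be $0$. Since the adversarial scheduling choices together with the protocol's random bits induce a probability distribution over computations, a probability-one event is non-empty, and a terminating sequential computation therefore exists. Once this is secured, the corollary follows as an immediate repackaging of Lemma~\ref{lem:carac}, recast as an existence statement about a set of configurations rather than about a specific computation.
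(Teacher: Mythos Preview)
Your approach is correct and matches the paper's, which simply states that the corollary follows directly from Lemma~\ref{lem:carac} without further argument. You have merely made explicit the one point the paper leaves implicit, namely the existence of a terminating sequential computation (handled separately for the deterministic and probabilistic cases), after which the set ${\cal S}$ is read off from Lemma~\ref{lem:carac} exactly as you describe.
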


\begin{theorem}\label{theo:1}
$\forall k, 0 \leq k < 3, \forall n>k$, there is no exploration protocol (even probabilistic) of a $n$-size ring with $k$ robots.
\end{theorem}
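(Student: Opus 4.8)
The plan is to derive a contradiction from Corollary~\ref{coro} for each admissible value of $k$, the key observation being that the two properties guaranteed by that corollary become jointly unsatisfiable as soon as $k \leq 2$. First I would dispose of the degenerate case $k=0$ directly: since $n > 0$ there is at least one node, yet with no robot at all no node can ever be visited, so the exploration problem admits no solution whatsoever, probabilistic or not, independently of the scheduler.

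For $k \in \{1,2\}$ I would argue by contradiction. Suppose some (possibly probabilistic) exploration protocol $\mathcal P$ existed for $k$ robots on a ring of $n > k$ nodes. Since $\mathcal P$ solves exploration it admits a terminating sequential computation, so Corollary~\ref{coro} applies and produces a set $\mathcal S$ of at least $n-k+1$ pairwise distinguishable configurations, each of which must contain \emph{a tower of less than $k$ robots}. Because $n > k$ we have $n-k+1 \geq 2 > 0$, hence $\mathcal S$ is non-empty and at least one such configuration is forced to exist.

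The crux of the argument, and the only genuinely delicate point, is to observe that a configuration containing a tower of \emph{less than} $k$ robots cannot exist when $k \leq 2$: by definition a node carries a tower only if its multiplicity is at least $2$, whereas ``less than $k$'' with $k \leq 2$ forces the multiplicity to be at most $1$. These two requirements are contradictory, so no configuration can satisfy the second property of Corollary~\ref{coro}; this forces $\mathcal S = \emptyset$ and contradicts $|\mathcal S| \geq n-k+1 \geq 2$. Since both $k=1$ and $k=2$ are covered by this single observation, and $k=0$ was settled directly, the impossibility holds for all $0 \leq k < 3$ and all $n > k$. I do not anticipate any serious obstacle beyond correctly unwinding the definition of a tower; the entire weight of the result rests on the technical lemmas now packaged in Corollary~\ref{coro}, and this theorem is essentially their immediate specialization to small $k$.
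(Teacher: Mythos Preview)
Your proposal is correct and follows essentially the same approach as the paper: the case $k=0$ is handled trivially, and for $k\in\{1,2\}$ you invoke Corollary~\ref{coro} and observe that no configuration can contain a tower of fewer than $k\leq 2$ robots since a tower by definition requires multiplicity at least $2$. The paper treats $k=1$ and $k=2$ with slightly different wording (no tower possible at all with one robot; no tower of fewer than two robots with two), but your unified phrasing covers both and the argument is the same.
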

\begin{proof}
First, for $k = 0$, the theorem is trivially verified. Consider then the case $k = 1$ and $k = 2$: with one robot it is impossible to construct a configuration with one tower; with two robots it is impossible to construct a configuration with one tower of less than $k$ robots ($k = 2$). Hence, for $k = 1$ and $k = 2$, the theorem is a direct consequence of Corollary \ref{coro}. 
\end{proof}

\begin{lemma}\label{lem:3n4}
$\forall n > 4$, there is no exploration protocol (even probabilistic) of a $n$-size ring with three robots.
\end{lemma}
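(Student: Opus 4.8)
The plan is to derive a contradiction with Corollary~\ref{coro} specialized to $k=3$. That corollary guarantees a subset ${\cal S}$ of at least $n-k+1 = n-2$ pairwise distinguable configurations, each containing a tower of fewer than $k=3$ robots. First I would pin down the shape of such configurations: with exactly three robots, a tower of ``less than three'' robots must be a tower of \emph{exactly two} robots (a tower of three is excluded, and a towerless configuration has no tower at all). Since only one robot remains outside that tower, every configuration in ${\cal S}$ necessarily consists of a single $2$-tower on one node and one isolated robot on a distinct node, all other $n-2$ nodes being free. Thus the whole question reduces to counting how many \emph{distinguable} configurations of this very rigid form can exist.

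Next I would carry out that count. Because the tower is unique, I can fix its position at $u_0$ by rotation, so a configuration is determined solely by the position $u_j$ of the lone robot, $j\in\{1,\dots,n-1\}$. Two such configurations $C_j$ and $C_{j'}$ are undistinguable exactly when one is a rotation or a mirror of the other; a rotation keeping the tower at $u_0$ forces $j=j'$, while reflecting about $u_0$ sends the lone robot from $u_j$ to $u_{n-j}$. Hence the only identification is $j \sim n-j$, and the number of distinguable configurations is the number of unordered pairs $\{j,n-j\}$, which works out to $\lfloor n/2\rfloor$ (treating the even case $j=n/2$ as a self-paired fixed point). I would state this explicitly: \emph{there are exactly $\lfloor n/2\rfloor$ distinguable configurations of three robots containing a tower of fewer than three robots.}

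Finally I would compare the two quantities. Corollary~\ref{coro} demands $\lfloor n/2\rfloor \ge n-2$, and a short case split shows this fails for $n>4$: for even $n$ it reads $n/2-2\le 0$, i.e.\ $n\le 4$, and for odd $n$ it reads $(n-3)/2\le 0$, i.e.\ $n\le 3$. Therefore, for every $n>4$, the number $\lfloor n/2\rfloor$ of available distinguable two-tower configurations is strictly smaller than the required $n-2$, contradicting Corollary~\ref{coro} and ruling out any (even probabilistic) three-robot exploration protocol. The step I expect to be most delicate is the configuration count: I must argue carefully that the unoriented ring collapses $j$ and $n-j$ but nothing more, and handle the even-$n$ self-symmetric case correctly, since a miscount there is exactly what would make the inequality appear to hold.
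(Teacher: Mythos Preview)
Your proposal is correct and follows essentially the same approach as the paper: invoke Corollary~\ref{coro} with $k=3$ to require at least $n-2$ pairwise distinguable configurations containing a tower of fewer than three robots, observe that there are only $\lfloor n/2\rfloor$ such configurations, and conclude $n\le 4$. The paper's proof merely asserts the count $\lfloor n/2\rfloor$ without justification, whereas you supply the rotation/mirror argument explicitly; otherwise the two proofs coincide.
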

\begin{proof}
With three robots, the size of the maximal set of distinguable configurations containing a tower of less than three robots is $\lfloor n/2 \rfloor$. By Corollary \ref{coro}, we have then the following inequality: $$\lfloor n/2 \rfloor \geq n-k+1$$From this inequality, we can deduce that $n$ must be less of equal than four and we are done.   
\end{proof}

\noindent From this point on, we know that, assuming $k<4$, Corollary \ref{coro} prevents the existence of any exploration protocol in any case except one: $k = 3$ and $n = 4$ (Theorem \ref{theo:1} and Lemma \ref{lem:3n4}). Actually, assuming that the scheduler is sequential is no sufficient to show the impossibility in this latter case: Indeed, there an exploration protocol for $k = 3$ and $n = 4$ if we assume a sequential scheduler. The protocol works as shown in Figure \ref{fig:algok3n4}. 

\begin{figure}[htpb]
\center
\includegraphics[scale=0.4]{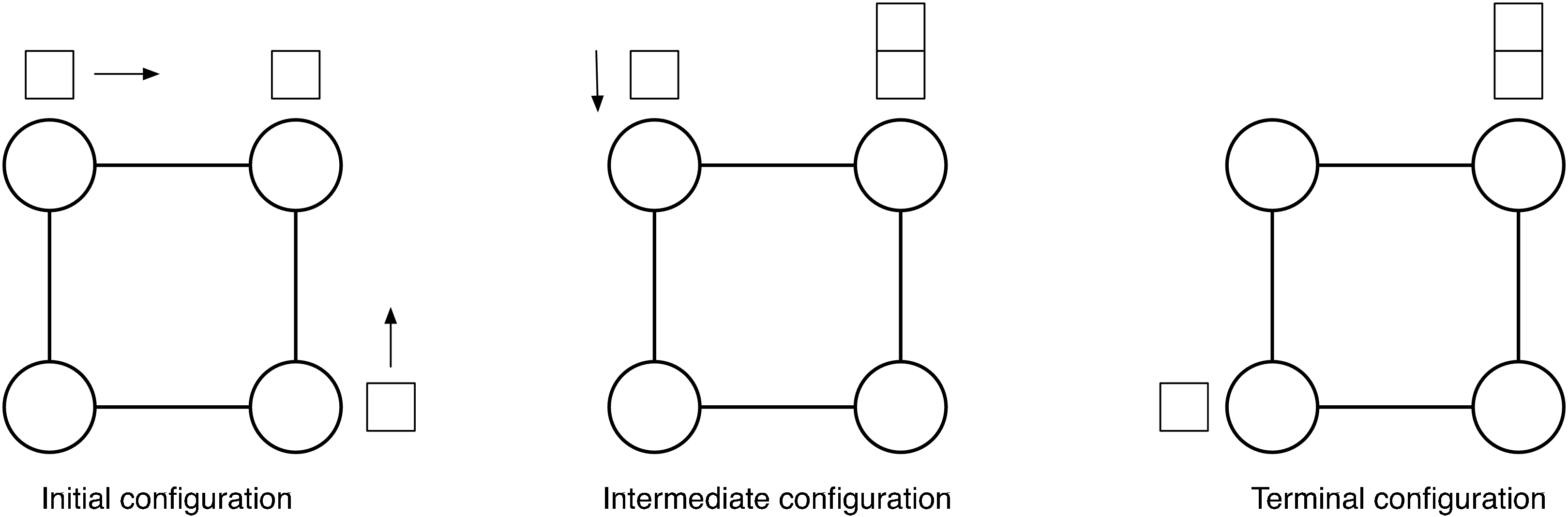}
\caption{Protocol for $n=4$ and $k=3$. (The arrows show the destinations of the robots if they are activated.)\label{fig:algok3n4}}
\end{figure}

We now show the impossibility in this latter using a (non-sequential) distributed scheduler. This proof is established by enumerating and testing all possible protocols for $k=3$ and $n=4$.

\begin{theorem}\label{theo:2}
There is no exploration protocol (even probabilistic) of a $n$-size ring with three robots for every $n > 3$.
\end{theorem}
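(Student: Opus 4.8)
The plan is to combine the counting bound already established with a finite, exhaustive analysis of the single residual case. For $n \geq 5$ the statement is exactly Lemma~\ref{lem:3n4}, so nothing new is needed there; the whole difficulty is concentrated in $n = 4$, $k = 3$. As already noted after Corollary~\ref{coro}, the sequential counting argument is powerless here --- indeed the protocol of Figure~\ref{fig:algok3n4} solves the problem under a sequential scheduler --- so the proof must genuinely use the freedom of the distributed scheduler, namely its two adversarial levers: it may activate \emph{any} non-empty subset of robots at a step, and, whenever an activated robot sits on a symmetry axis (its view is symmetric), it resolves which of the two edges that robot traverses.

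First I would fix the finite state space. Up to indistinguishability there are only four configurations with three robots on a four-ring: the unique towerless configuration $A = \langle 1110\rangle$; the two towers of two robots, one with the isolated robot \emph{adjacent} to the tower, $B = \langle 2100\rangle$, and one with it \emph{opposite}, $C = \langle 2010\rangle$; and the tower of three $D = \langle 3000\rangle$. I would record, for each, the view of every robot and the symmetry: $A$, $C$, $D$ are symmetric while $B$ is chiral; in $A$ the two end robots share one (asymmetric) view and the middle robot has a symmetric view, whereas in $C$ and $D$ every robot has a symmetric view. Since every initial configuration is $A$ and $A$ cannot be terminal for a correct protocol (otherwise the empty node is never visited), at least one view of $A$ must carry a positive move probability.

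The core of the argument is then to enumerate the \emph{qualitative} protocols --- for each view, which of stay / move-toward-occupied / move-toward-empty receives positive probability --- and, for each, exhibit an admissible schedule that with \emph{positive} probability either reaches a terminal configuration in which the fourth node was never visited, or produces an infinite computation meeting no terminal configuration. Both outcomes contradict probabilistic exploration, which must terminate with probability one along computations that have visited all nodes. The decisive transitions are few: activating the middle robot of $A$ alone, with the adversary sending it onto $u_0$, yields $C = \langle 2010\rangle$ while leaving $u_3$ unvisited; activating the two end robots of $A$ together, their independent coins land with positive probability on the common occupied neighbour, collapsing them into $D = \langle 3000\rangle$ again without visiting $u_3$, or onto the common empty neighbour, reaching $C$. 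The linchpin is \emph{obliviousness}: the action prescribed on $C$, $B$ or $D$ is the same whether or not the originally empty node has already been visited, so if any of these is terminal the adversary reaches it prematurely, and if it is not terminal the adversary, exploiting the symmetric views, keeps re-injecting symmetry and defers termination.

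The step I expect to be the main obstacle is precisely the \emph{probabilistic} bookkeeping. Because each robot flips its own independent coin, the adversary cannot \emph{force} a single successor; instead one must argue that each damaging transition carries strictly positive probability and then use fairness together with obliviousness --- the same configuration recurs and triggers the same distribution of moves --- to conclude that the bad event is realised with probability one along some admissible computation, i.e.\ that the protocol's ``good'' behaviour cannot have probability one. Carrying this out uniformly over the handful of qualitative protocols, and checking that even in the chiral configuration $B$ the adversary retains a defeating schedule, is the tedious but finite heart of the proof; the remaining cases reduce to the transitions sketched above.
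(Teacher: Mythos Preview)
Your proposal is correct and follows essentially the same approach as the paper: reduce to $n=4$ via Lemma~\ref{lem:3n4}, then perform a finite case analysis on which view in the initial configuration carries positive move probability, using the distributed scheduler's power to activate chosen subsets and to resolve symmetric views so as either to reach each candidate terminal configuration while $u_3$ is still unvisited, or to sustain a fair non-terminating computation. The paper's case split (end robot moves toward the free node / toward the occupied node / middle robot moves) and its subsequent treatment of configurations $B$, $C$, $D$ match the transitions and the obliviousness-based recurrence argument you outline.
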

\begin{proof}
Lemma \ref{lem:3n4} excludes the existence of any exploration protocol for three robots in a ring of $n>4$ nodes. Hence, to show this theorem, we just have to show that there is no exploration protocol for three robots working in a ring of four nodes.

Assume, by the contradiction, that there exists an exploration protocol ${\cal P}$ for three robots in a ring of four nodes. Then, any possible initial configuration is undistinguable with the configuration presented in Figure \ref{fig:init}. Moreover, any possible terminal configuration contains a tower and so is undistinguable with one of the three configurations presented in Figure \ref{fig:term}.

\begin{figure}[htpb]
\center
\includegraphics[scale=0.4]{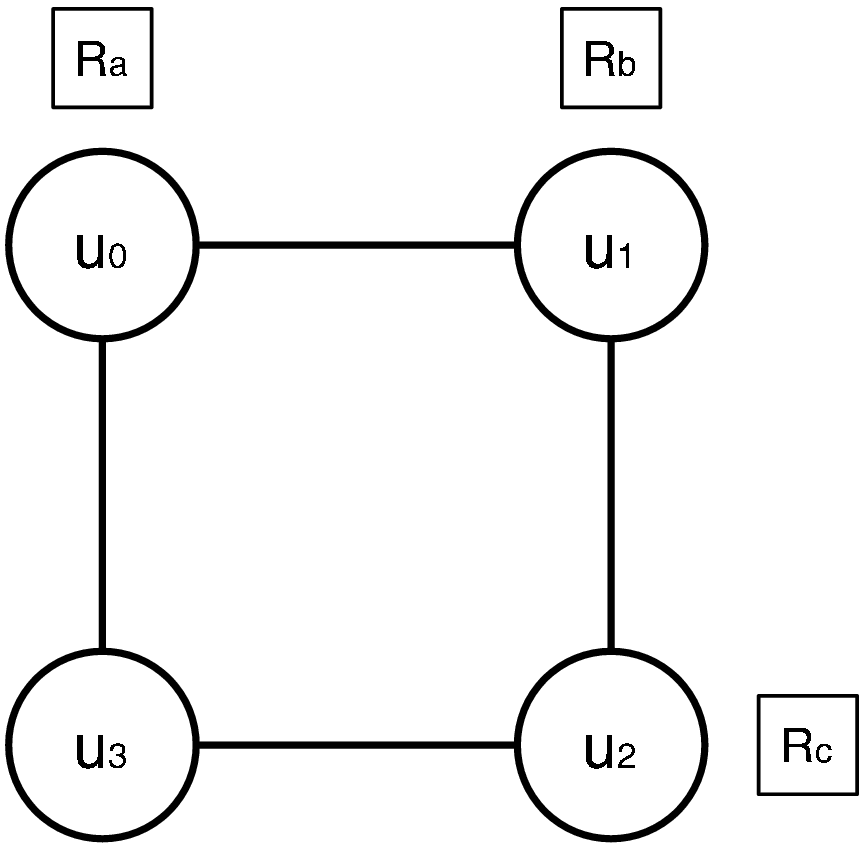}
\caption{Initial configuration for $n = 4$ and $k = 3$. (The indices are used for notation purposes only.)\label{fig:init}}
\end{figure}

\begin{figure}[htpb]
\center
\includegraphics[scale=0.45]{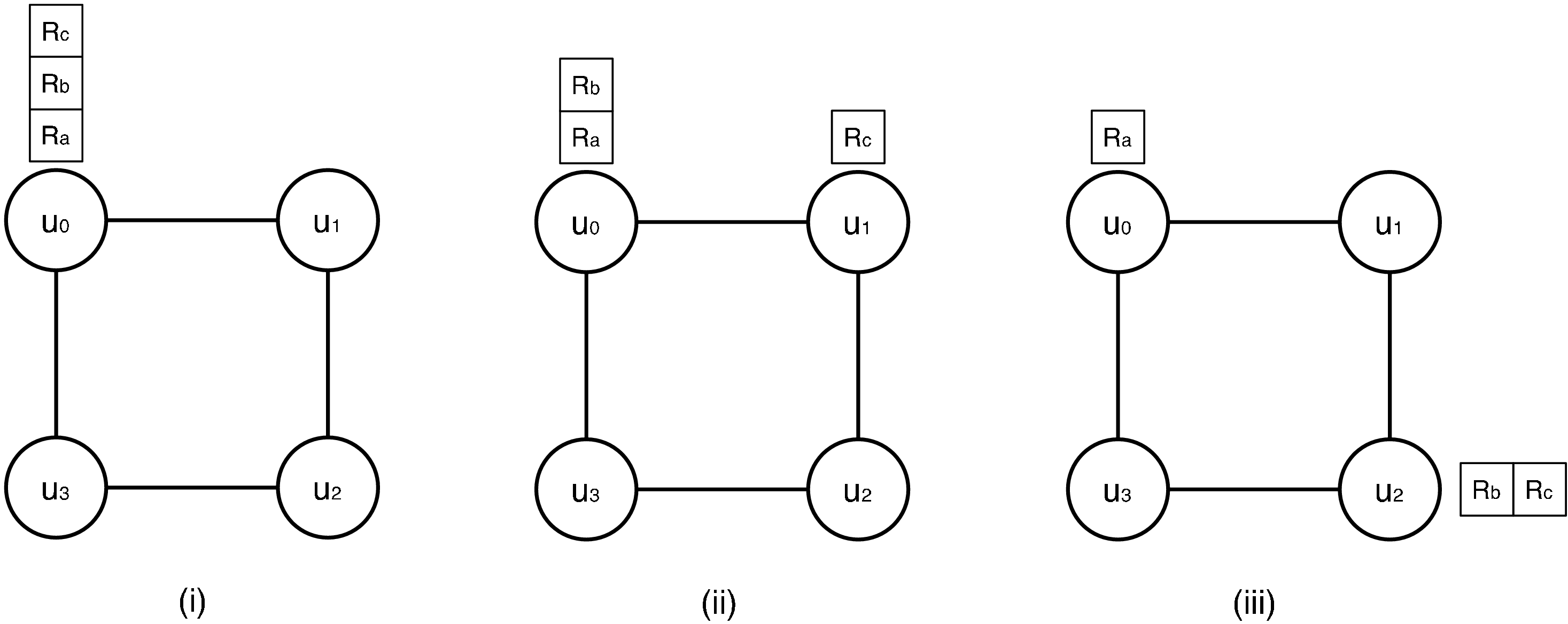}
\caption{Terminal configurations for $n = 4$ and $k = 3$. (The indices are used for notation purposes only.)\label{fig:term}}
\end{figure}

Consider that the system is initially in the configuration of Figure \ref{fig:init}. Three cases are possible at instant 0 using ${\cal P}$:

\begin{itemize}
\item {\em There is a strictly positive probability that robot $R_a$ (resp. robot $R_c$) moves to node $u_3$ if activated by the scheduler.\footnote{If ${\cal P}$ is deterministic the probability is 1 and if activated, $R_a$ moves in one step.}} In this case, assume that the scheduler activates $R_a$ until it moves. The probability that $R_a$ eventually moves is 1 (resp. $R_a$  moves in one step if ${\cal P}$ is determistic). Once $R_a$ has moved, $R_b$ has a strictly positive probability to move to node $u_0$ if activated. Assume then that the scheduler activates $R_b$ until it moves. The probability that $R_b$ eventually moves is 1. Repeating this scheme for $R_c$ and so on, it is possible to construct a distributed fair computation that does not terminate in finite expected time (resp. in finite time, if if ${\cal P}$ is determistic), a contradiction.
\item {\em There is a strictly positive probability that robot $R_a$ (resp. robot $R_c$) moves to node $u_1$ if activated by the scheduler.} In this case, there an admissible computation where $R_a$ and $R_c$ moves to node $u_1$ in the first step. At instant 1, the system is in a configuration that is undistinguable with configuration (i) of Figure \ref{fig:term}. As node $u_3$ is still not visited in this case, any configuration that is undistinguable with configuration (i) cannot be terminal. There is also an admissible computation where only $R_a$ moves to node $u_1$ in the first step. At instant 1, the system is in a configuration that is undistinguable with configuration (ii) of Figure \ref{fig:term}. As node $u_3$ is still not visited in this case, any configuration that is undistinguable with configuration (ii) cannot be terminal. Moreover, assuming that the system reaches a configuration undistinguable from configuration (i) of Figure \ref{fig:term} at instant 1, there is a strictly positive probability that the three robots moves (the configuration is not terminal and all robots have the same view). If they move, the adversary can choose which incident edge they traverse because the configuration is symmetric. Hence, we can obtain a configuration undistinguable with configuration (iii) of Figure \ref{fig:term} and where node $u_3$ is still not visited. Thus, any configuration that is undistinguable with configuration (iii) cannot be terminal. Hence, no configuration can be terminal, a contradiction.
\item {\em There is a strictly positive probability that robot $R_b$ moves if activated by the scheduler.} Assume that the scheduler activates $R_b$ until it moves. Then, the probability that $R_b$ eventually moves is 1. Once $R_b$ decide to move, the adversary can choose the edge that $R_b$ traverses because the view from $R_b$ is symmetric. Hence, the system can reache the configuration $\gamma$: $R_a$ is in node $u_0$, $R_b$ and $R_c$ and in node $u_2$. This configuration is undistinguable with configuration (iii) in Figure \ref{fig:term} and node $u_3$ is still not visited. Consider the two following cases:
\begin{itemize}
\item {\em The probability that $R_a$ moves, if activated, is 0.} Then, there is a strictly positive probability that $R_c$ (resp. $R_b$) moves if activated. Assume that the scheduler activates $R_a$ and then $R_c$ until $R_c$ moves. The probability that $R_c$ eventually moves is 1 and as the the view from $R_c$ is symmetric, the adversary can decide which edge $R_c$ will traverse. Assume that the adversary forces $R_c$ to go to node $u_1$, the system reaches a configuration undistinguable with the initial configuration. Repeating the same scheme infinitively often, we obtain a distributed fair computation that does not terminate in finite expected time, a contradiction.
\item {\em The probability that $R_a$ moves ,if activated, is strictly positive.} Assume that the scheduler activates $R_a$ until it moves. Then, the probability that $R_a$ eventually moves is 1 and as the the view from $R_a$ is symmetric, the adversary can decide which edge $R_a$ will traverse. Assume that $R_a$ moves to node $u_1$, the system reaches the following configuration: $R_a$ is in node $u_1$, $R_b$ and $R_c$ are in node $u_2$, and node $u_3$ is still not visited. This configuration is undistinguable with configuration (ii) in Figure \ref{fig:term}. Consider the two following cases:
\begin{itemize}
\item {\em The probability that $R_c$ (resp. $R_b$) moves, if activated, is strictly positive.} 
\begin{itemize}
\item {\em Assume that the destination of $R_c$, if $R_c$, is node $u_3$.} Then, the system reaches a configuration undistinguable from initial configuration. Repeating same the scheme infinitively often, we obtain a distributed fair computation that does not terminate in finite expected time, a contradiction.
\item {\em Assume that the destination of $R_c$, if $R_c$ moves, is node $u_1$.} Then, the destination of $R_b$, if $R_b$ moves, is node $u_1$ too. Hence, there is an admissible computation where $R_b$ and $R_c$ move to node $u_1$. In this case, the system reaches a configuration that is not distinguable from configuration (i) in Figure \ref{fig:term} while node $u_3$ is still not visited. In this case, no configuration can be terminal, a contradiction.
\end{itemize}
\item {\em The probability that $R_b$ (resp. $R_c$) moves, if activated, is 0.} Then, the probability that $R_a$ moves is strictly positive. Consider the two following cases:
\begin{itemize}
\item {\em Assume that the destination of $R_a$, if $R_a$, is node $u_2$.} In this case, there is an admissible computation where $R_a$ move to node $u_2$: the system reaches a configuration that is not distinguable from configuration (i) in Figure \ref{fig:term} while node $u_3$ is still not visited. In this case, no configuration can be terminal, a contradiction. 
\item {\em Assume that the destination of $R_a$, if $R_a$, is node $u_0$.}  Assume that the scheduler activates $R_b$, $R_c$, and then $R_a$ until $R_a$ moves. The probability that $R_a$ eventually moves is 1 and we retreive a configuration that is undistinguable with configuration $\gamma$. Repeating the same scheme infinitively often, we obtain a fair distributed computation that does not terminate in finite expected time, a contradiction.
\end{itemize} 
\end{itemize}
\end{itemize}
\end{itemize}
In all cases, we obtain a contradiction: there no exploration protocol for three robots in a ring of $n>4$ nodes and the theorem is proven.
\end{proof}

\noindent From Theorems \ref{theo:1} and \ref{theo:2}, we can deduce the following corollary:

\begin{corollary}\label{cor:mesfesses}
$\forall k, 0 \leq k < 4, \forall n > k$, there is no exploration protocol (even probabilistic) of a $n$-size ring with $k$ robots.
\end{corollary}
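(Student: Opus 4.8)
The plan is to obtain the corollary by a straightforward case analysis on $k$, since the range $0 \leq k < 4$ consists of exactly the four values $k \in \{0,1,2,3\}$, and these are precisely covered by the two preceding theorems. The only thing to verify is that every pair $(k,n)$ with $0 \leq k < 4$ and $n > k$ falls under the hypotheses of one of Theorem~\ref{theo:1} or Theorem~\ref{theo:2}.

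First I would split on the value of $k$. For $k \in \{0,1,2\}$ the constraint $0 \leq k < 3$ of Theorem~\ref{theo:1} is satisfied, so that theorem directly rules out any (even probabilistic) exploration protocol for all $n > k$. For the remaining value $k = 3$, the constraint $n > k$ becomes $n > 3$, which is exactly the hypothesis of Theorem~\ref{theo:2}; hence that theorem rules out any such protocol. Since these two cases are exhaustive over $k \in \{0,1,2,3\}$ and, in each case, cover every admissible $n > k$, the nonexistence holds for all $(k,n)$ in the stated range, which establishes the corollary.

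There is no real obstacle here: all of the substantive work has already been carried out in Lemmas~\ref{lem:nk1}--\ref{lem:carac}, Corollary~\ref{coro}, and Theorems~\ref{theo:1} and~\ref{theo:2}. The corollary is merely the union of the two theorems' conclusions, so the only care needed is to confirm that the boundary case $k = 3$, $n = 4$ --- which required the separate non-sequential argument of Theorem~\ref{theo:2} rather than the counting argument of Corollary~\ref{coro} --- is indeed subsumed, as it is via the $n > 3$ clause of Theorem~\ref{theo:2}.
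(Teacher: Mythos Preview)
Your proposal is correct and matches the paper's own reasoning: the paper simply states that the corollary follows from Theorems~\ref{theo:1} and~\ref{theo:2}, and your case split on $k\in\{0,1,2\}$ versus $k=3$ is exactly the intended (and only natural) way to combine them.
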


\section{Positive Result}
\label{sec:possible}

In this section, we propose a probabilistic exploration protocol for $\boldsymbol{k = 4}$ robots in a ring of $\boldsymbol{n > 8}$ nodes. We first define some useful terms in Subsection \ref{sub:def}. We then give the general principle of the protocol in Subsection \ref{sub:prin}. Finally, we fully describe and prove the protocol in Subsection \ref{sub:des}.

\subsection{Definitions}\label{sub:def} 

Below, we define some terms that characterize the configurations.

We call {\em segment} any maximal non-empty elementary path of occupied nodes. The {\em length of a segment} is the number of nodes that compose it. We call {\em $x$\mbox{-}segment} any segment of length $x$. An {\em isolated node} is a node belonging to a $1$\mbox{-}segment. 

We call {\em hole} any maximal non-empty elementary path of free nodes. The {\em length of a hole} is the number of nodes that compose it. We call {\em $x$\mbox{-}hole} any hole of length $x$. In the hole $h = u_i, \dots, u_{k}$  ($k \geq i$) the nodes $u_i$ and $u_{k}$ are terms as the {\em extremities} of $h$. We call \emph{neighbor} of an hole any node that does not belong to the hole but is neighbor of one of its extremities. In this case, we also say that the hole is a \emph{neighboring hole} of the node. By extension, any robot that is located at a neighboring node of a hole is also referred to as a neighbor of the hole.

We call {\em arrow} a maximal elementary path $u_i, \dots, u_{k}$ of length at least four such that $(i)$ $u_i$ and $u_{k}$ are occupied by one robot, $(ii)$ $\forall j \in [i+1 \dots k-2]$, $u_j$ is free, and $(iii)$ there is a tower of two robots in $u_{k-1}$. The node $u_i$ is called the {\em arrow tail} and the node $u_{k}$ is called the {\em arrow head}. The {\em size} of an arrow is the number of free nodes that compose it, {\em i.e.}, its the length of the arrow path minus 3. Note that the minimal size of an arrow is $1$ and the maximal size is $n-4$. Note also that when there is an arrow in a configuration, the arrow is unique. An arrow is said {\em primary} if its size is 1. An arrow is said {\em final} if its size is $n-4$.

\begin{figure}[htpb]
\center
\includegraphics[scale=0.3]{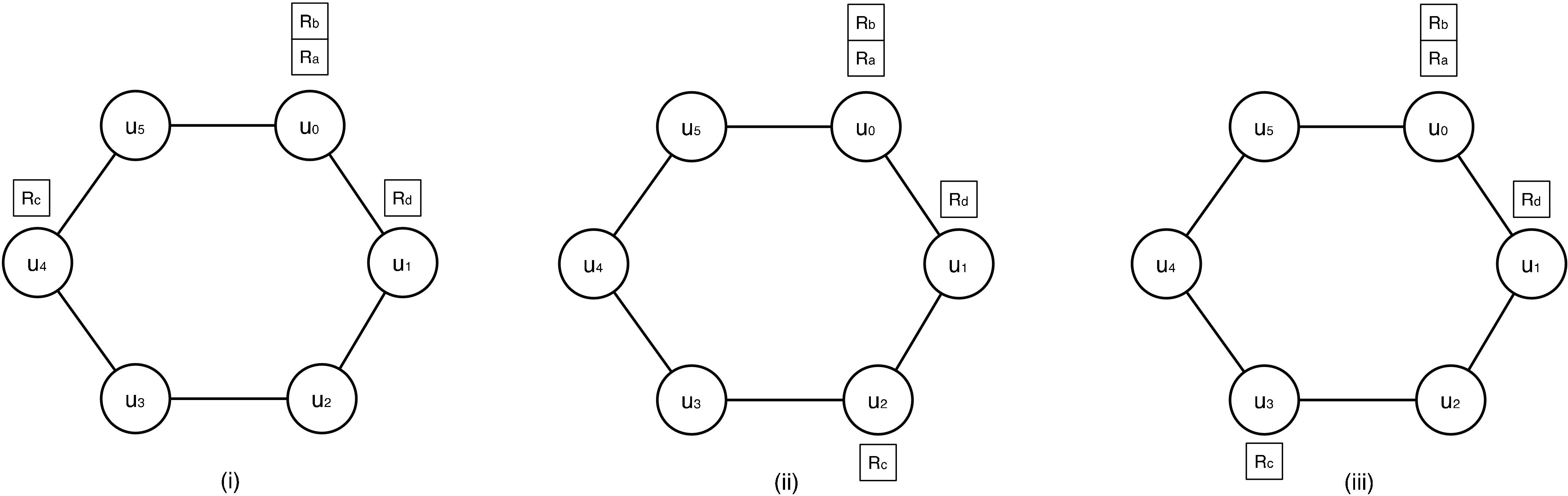}
\caption{Arrows\label{fig:arrows}}
\end{figure}

Figure \ref{fig:arrows} illustrates the notion of arrows: In Configuration $(i)$ the arrow is formed by the path $u_4$, $u_5$, $u_0$, $u_1$; the arrow is primary; the node $u_4$ is the tail and the node $u_1$ is the head. In Configuration $(ii)$, there is a final arrow (the path $u_2$, $u_3$, $u_4$, $u_5$, $u_0$, $u_1$). Finally, the size of the arrow in Configuration $(iii)$ (the path $u_3$, $u_4$, $u_5$, $u_0$, $u_1$) is 2.    

\subsection{Overview of the solution}\label{sub:prin}

Our protocol (Algorithm \ref{algo:main}) proceeds in three distinct phases: 
\begin{itemize} 
\item {\bf Phase $\mathrm{I}$:}
 Starting from a configuration without tower, the robots move along the ring in such a way that ($i$) they 
 never form any tower and ($2$) form a unique segment (a $4$-segment) in finite expected time.
\item {\bf Phase $\mathrm{II}$:}
 Starting from a configuration with a unique segment, the four robots form an primary arrow in finite expected time. 
The $4$\mbox{-}segment is maintained until the primary arrow is formed.
\item {\bf Phase $\mathrm{III}$:} 
 Starting from a configuration where the four robots form a primary arrow, the arrow tail moves toward the arrow head in such way that the existence of an arrow is always maintained. The protocol terminates when robots form a final arrow. At the termination, all nodes have been visited.
\end{itemize}

Note that the protocol we propose is probabilistic. As a matter of fact, as most as possible the robots move deterministically. However, we use randomization to break the symmetry in some cases: When the system is in a symmetric configuration, the scheduler may choose synchronously to activated some processes in such way that the system stays in a symmetric configuration. To break the symmetry despite the choice of the scheduler, we proceed as follows: The activated nodes toss a coin (with a uniform probability) during their Compute phase. If they win the toss, they decide to move, otherwise they decide to stay idle. In this case, we say that the robots {\bf try to move}. Conversely, when a process deterministically decides to move in its Compute phase, we simply say that the process {\bf moves}.

\begin{algorithm*}[htpb]
\scriptsize
\begin{algorithmic}[1]
\If{the four robots do not form a final arrow}\label{firstline}
\If{the configuration contains neither an arrow nor a $4$-segment}
\State Execute Procedure $Phase\ \mathrm{I}$;
\Else
\If{the configuration contains a $4$-segment} 
\State Execute Procedure $Phase\ \mathrm{II}$;
\Else \Rem{the configuration contains an arrow}
\State Execute Procedure $Phase\ \mathrm{III}$;
\EndIf
\EndIf
\EndIf
\end{algorithmic}
\caption{The protocol. \label{algo:main}}
\end{algorithm*}

\subsection{Detailed description of the solution}\label{sub:des}

\subsubsection{Phase $\mathrm{I}$}   

Phase $\mathrm{I}$ is described in Algorithm \ref{algo:P1}. The aim of this phase is to eventually form a $4$-segment without creating any tower during the process. Roughly speaking, in asymmetric configurations, robots moves determiniscally  (Lines \ref{line:1}, \ref{line:2}, \ref{line:5}, \ref{line:6}). By contrast, in symmetric configurations, robots moves probabilistically using \TRY (Lines \ref{line:3} and \ref{line:4}). Note that in all case, we prevent the tower formation by applying the following constraint: a robot can move through a neighboring hole $\mathcal H$ only if its length is at least 2 or if the other neighboring robot can move through $\mathcal H$. 

\begin{algorithm*}[htpb]
\scriptsize
\begin{algorithmic}[1]
\If{the configuration contains a $3$-segment}
\If{I am the isolated robot}
\State \MOVE toward the $3$-segment through the shortest hole;\label{line:1}
\EndIf

\Else

\If{the configuration contains a unique $2$-segment} \Rem{ Two robots are isolated}

\If{I am at the closest distance from the $2$-segment}
\State \MOVE toward the $2$-segment through the hole having me and an extremity of the $2$-segment as neighbors;\label{line:2}
\EndIf

\Else

\If{the configuration contains (exactly) two $2$-segments} 

\If{I am a neighbor of a longuest hole}
\State \TRY toward the other $2$-segment through my neighboring hole;\label{line:3}
\EndIf

\Else \Rem{the four robots are isolated}

\State Let $l_{max}$ be the length of the longuest hole;
\If{every robot is neighbor of a $l_{max}$\mbox{-}hole}
\State \TRY through a neighboring $l_{max}$\mbox{-}hole;\label{line:4}
\Else

\If{3 robots are neighbors of a $l_{max}$\mbox{-}hole}
\If{I am neighbor of only one $l_{max}$\mbox{-}hole}
\State \MOVE toward the robot that is neighbor of no $l_{max}$\mbox{-}hole through my shortest neighboring hole;\label{line:5}
\EndIf

\Else \Rem{2 robots are neighbors of the unique $l_{max}$\mbox{-}hole}

\If{I am neighbor of the unique $l_{max}$\mbox{-}hole}
\State \MOVE through my shortest neighboring hole;\label{line:6}
\EndIf
\EndIf

\EndIf
\EndIf
\EndIf
\EndIf
\end{algorithmic}
\caption{Procedure $Phase\ \mathrm{I}$. \label{algo:P1}}
\end{algorithm*}

The following lemma (Lemma \ref{lem:notower}) shows that no tower can by creating during Phase $\mathrm{I}$. The next one (Lemma \ref{lem:p1}) shows that executing Algorithm \ref{algo:P1}, a $4$-segment is eventually created.

\begin{lemma}\label{lem:notower} If the configuration at instant $t$ contains neither a $4$\mbox{-}segment nor a tower, then the configuration at instant $t+1$ contains no tower.
\end{lemma}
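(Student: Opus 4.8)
The plan is to reduce the no-tower claim to a statement about where moving robots land, and then to run a case analysis that mirrors the branch structure of Algorithm~\ref{algo:P1}. First I would record the only two ways a tower can appear at instant $t+1$: (a) a robot moves onto a node that is occupied at instant $t$ by a robot that stays idle, or (b) two or more robots that move land on the same node. I would dispose of (a) immediately by the observation that, in Phase~$\mathrm{I}$, \emph{every} move is a move ``through a hole'', hence every moving robot's destination is a node that is \emph{free} at instant $t$; a free node cannot be occupied at $t$ by an idle robot, so (a) never occurs. Thus it remains to rule out (b).

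For (b) I would isolate the elementary geometric fact that two moving robots can collide only if they enter a common neighboring hole $\mathcal H$ from its two extremities and $|\mathcal H| = 1$ (both then target the unique free node of $\mathcal H$); if $|\mathcal H|\geq 2$ the two extremity-neighbors step onto the two distinct end nodes of $\mathcal H$, and no other pair of robots can share a destination. So the whole lemma reduces to: \emph{Algorithm~\ref{algo:P1} never instructs both neighbors of a $1$-hole to move into it.} This is exactly the situation the tower-prevention constraint stated before the algorithm is meant to exclude, and I would invoke it together with the specific move chosen in each branch.

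The case analysis then follows the four possible shapes of a towerless configuration containing no $4$-segment: a $3$-segment plus an isolated robot, a unique $2$-segment plus two isolated robots, two $2$-segments, and four isolated robots. The first two branches (Lines~\ref{line:1} and~\ref{line:2}) are easy: only isolated robots closest to the target move, and in each such branch no two moving robots are the two neighbors of one hole, so their destinations are distinct free nodes. For the two-$2$-segment branch (Line~\ref{line:3}) and the ``every robot is neighbor of an $l_{max}$-hole'' branch (Line~\ref{line:4}), I would use the hypothesis $n>8$: with only four occupied nodes the free nodes number $n-4>4$, so the longest hole has length at least $2$; since in these branches robots move only through a longest hole, any two robots entering the same hole from opposite ends land on distinct nodes, and this argument is insensitive to \emph{which} subset of the coin-tossing robots actually moves.

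The delicate part, and the step I expect to be the main obstacle, is the four-isolated-robots branch when robots move through their \emph{shortest} neighboring hole (Lines~\ref{line:5} and~\ref{line:6}), since such a shortest hole may well have length $1$. Here I cannot lean on a length bound and must argue directly from the selection rules: in Line~\ref{line:5} only the robots that are neighbors of \emph{exactly one} $l_{max}$-hole move, and they move toward the unique robot adjacent to no $l_{max}$-hole, so I would check that the opposite neighbor of each traversed $1$-hole is not simultaneously selected; in Line~\ref{line:6} only the two neighbors of the unique $l_{max}$-hole move, each through its shortest \emph{other} hole, and I would verify that these two shortest holes are distinct and that neither is entered from both sides. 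In every sub-case the tower-prevention constraint forbids exactly the residual length-$1$ moves that the pure ``shortest hole'' rule might otherwise allow, which closes the argument and, combined with the reduction above, yields that no tower is present at instant $t+1$.
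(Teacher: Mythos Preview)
Your plan is correct and mirrors the paper's own proof: both run the same case analysis along the branches of Algorithm~\ref{algo:P1}, checking in each branch that moving robots land on distinct free nodes. Your upfront reduction---every Phase~$\mathrm{I}$ move targets a free node, so the only danger is two robots entering a common $1$-hole from opposite sides---is a clean piece of scaffolding that the paper leaves implicit, but the per-branch verifications you outline (including the use of $n>8$ to bound $l_{\max}\geq 2$ in Lines~\ref{line:3}--\ref{line:4}) coincide with the paper's. One small caution on Lines~\ref{line:5}--\ref{line:6}: the algorithm contains no explicit rule that \emph{forbids} a length-$1$ move, so your closing sentence should not lean on a separate ``tower-prevention constraint''; the argument must rest solely on what you already sketched, namely that the opposite neighbor of each traversed short hole is never among the robots selected to move---which is exactly how the paper's proof handles those two sub-cases.
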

\begin{proof} Let $\gamma$ be the configuration at instant $t$. First, note that the robots executes $Phase\ \mathrm{I}$ (Algorithm \ref{algo:P1}) in $\gamma$. Note also that $\gamma$ satisfies one of the following cases:
\begin{itemize}
\item \emph{$\gamma$ contains a $3$\mbox{-}segment}. In this case, only the (unique) isolated robot can move and, if it does, it moves to a free node (see Line \ref{line:1}). Hence, no tower is created at instant $t+1$. 
\item \emph{$\gamma$ contains a unique $2$\mbox{-}segment}. Two cases are possible:
\begin{itemize}
\item \emph{There is a unique isolated robot $\mathcal R$ at the closest distance from the $2$\mbox{-}segment}. In this case, only $\mathcal R$ can move and, if it does, it moves to free node (see Line \ref{line:2}), so no tower is created at instant $t+1$.
\item \emph{The two isolated robots are at the same distance from the $2$\mbox{-}segment}. In this case, the two isolated robots can move but as they follow their shortest path to the $2$\mbox{-}segment (see Line \ref{line:2}) and there is no tower in $\gamma$, they follow distinct paths and no tower is created at instant $t+1$.
\end{itemize}
Hence, in the two subcases no tower is created at instant $t+1$.
\item \emph{$\gamma$ contains two $2$\mbox{-}segments}. In this case, as there is four robots and the size of the ring is greater than 8, the size of the longuest hole is at least three. In such a configuration, the only possible moves are the moves where robots move through one of their neighboring holes of length at least two (see Line \ref{line:3}). Hence, all moving robots move to a different free node: no tower is created at instant $t+1$.
\item \emph{$\gamma$ contains four isolated robots}. Let $l_{max}$ be the length of the longuest hole in $\gamma$. In this case, as there is four robots and the size of the ring $n$ is greater than 8, $l_{max}\geq 2$. Consider then the following three subcases:
\begin{itemize}
\item \emph{Every robot is neighbor of a $l_{max}$\mbox{-}hole}. In this case, the configuration is symmetric. Every robot can move in the next step but to a neighboring hole of size at least two (see Line \ref{line:4}). So, all moving robots move to a different free node. Hence, no tower is created at instant $t+1$.
\item \emph{Three robots are neighbors of a $l_{max}$\mbox{-}hole}. Let $\mathcal R$ be the robot that is not neighbor of any $l_{max}$\mbox{-}hole. In this case, the robots that may move (at most two) move through their neighboring hole having $\mathcal R$ as other neighbor (see Line \ref{line:5}). As $\mathcal R$ cannot move, no tower is created at instant $t+1$.
\item \emph{Two robots, say $\mathcal R_1$ and $\mathcal R_2$, are neighbors of the unique $l_{max}$\mbox{-}hole}. In this case, only $\mathcal R_1$ and $\mathcal R_2$ can move. If $\mathcal R_1$ (resp. $\mathcal R_2$) moves, then $\mathcal R_1$ (resp. $\mathcal R_2$) through its neighboring hole having not $\mathcal R_2$ (resp. $\mathcal R_1$) as other neighbor (see Line \ref{line:6}). So, all moving robots move to a different free node. As a consequence, no tower is created at instant $t+1$.
\end{itemize}
\end{itemize}
In all cases, the configuration obtained at instant $t+1$ contains no tower and the lemma holds. 
\end{proof}

\begin{lemma}\label{lem:p1} Starting from any initial configuration, the system reaches in finite expected time a configuration containing a $4$\mbox{-}segment.
\end{lemma}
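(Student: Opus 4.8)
The plan is to argue by a case analysis on the structure of the current towerless configuration, mirroring the branching of Algorithm~\ref{algo:P1}. Since every initial configuration is towerless and, by Lemma~\ref{lem:notower}, no tower is ever created while Phase~$\mathrm{I}$ is executed, as long as no $4$-segment is present the configuration belongs to exactly one of four types: (a) a $3$-segment with one isolated robot, (b) a unique $2$-segment with two isolated robots, (c) two $2$-segments, or (d) four isolated robots. I would prove that from each type the system reaches a $4$-segment in finite expected time, the difficulty increasing from type~(a) to type~(d).

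The deterministic part covers types~(a) and~(b) together with the asymmetric sub-cases of~(d). In type~(a) only the isolated robot is enabled (Line~\ref{line:1}); fairness forces it to be activated, and each of its moves strictly shortens the hole separating it from the $3$-segment, so a $4$-segment appears after finitely many activations. Even when this configuration is symmetric the argument survives, since a single robot moves and the adversary's only freedom is the direction it takes, which shortens a hole either way. Type~(b) feeds into type~(a): the closest isolated robot --- or both, if they are equidistant --- walks toward the $2$-segment along a fixed hole (Line~\ref{line:2}), each hop shortening that hole, until a $3$-segment, or directly a $4$-segment, is formed. Lines~\ref{line:5} and~\ref{line:6} of type~(d) are handled the same way: a unique longest hole designates the movers deterministically and their merging brings the four robots strictly closer together.

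The randomized part is where symmetry must actively be broken: type~(c), which always carries a reflective symmetry exchanging its two $2$-segments, and the fully symmetric sub-case of type~(d) (Line~\ref{line:4}). Against these a purely deterministic rule could be defeated by a scheduler that synchronously activates the mirror-image robots and thereby preserves the symmetry forever; the \TRY rule is what prevents this. I would show that, whenever the scheduler activates the enabled symmetric robots, the coin tosses yield a symmetry-breaking outcome --- for instance the event that \emph{exactly one} enabled robot actually moves --- with probability at least a constant $p_0>0$ that does not depend on the configuration, and that such an outcome produces a configuration of type~(a) or~(b), from which the deterministic analysis already guarantees convergence. The catch is that the unfavourable coin outcomes need not be neutral: a two-$2$-segment configuration can momentarily split into four isolated robots, so the process may regress to type~(d) and even cycle among symmetric configurations. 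Hence no naive measure is monotone, and the heart of the argument is to show that these regressions cannot be exploited forever.

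To assemble the result I would prove that, from every configuration, a $4$-segment is reached with probability $1$ and in finite expected time. The deterministic transitions contribute only bounded, scheduler-limited delays, so the time is governed by how often the computation passes through the symmetric bottlenecks of types~(c) and~(d); each passage leaves the bottleneck toward a type from which deterministic convergence holds with probability at least $p_0$ per activation round, and the desired bound then follows from a geometric-tail estimate on the number of rounds spent in and returning to these bottlenecks. The main obstacle is exactly this accounting in the presence of the adversarial fair scheduler, and I expect three delicate points: establishing the lower bound $p_0$ \emph{uniformly} over the infinitely many symmetric configurations (which are parameterized by their hole lengths) and over the scheduler's choice of which symmetric robots to activate; designing a variant function --- necessarily finer than the segment count or the gathering distance, both of which can increase on a regressing step --- that is a genuine supermartingale and certifies that the adversary cannot chain the regressions into a positive-probability non-terminating computation; and checking that the scheduler cannot inflate the expectation by inserting arbitrarily long, though finite, runs of non-moving activations between the productive steps.
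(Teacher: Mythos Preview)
Your case split and the treatment of the deterministic branches match the paper's proof. The divergence is in how you propose to glue the randomized branches together: you plan a supermartingale argument with a uniform lower bound $p_0$ and worry about regressions from type~(c) back to type~(d). The paper sidesteps all of this with one observation you are missing: for a \emph{fixed} ring size $n$, the set of towerless configurations without a $4$-segment is \emph{finite}. Hence the paper merely checks that from every such configuration there is a strictly positive probability---against any scheduler---of eventually reaching a $4$-segment (this is exactly your case analysis, with ``exactly one robot moves'' as the favourable outcome in the randomized cases); finiteness of the state space automatically makes that probability uniformly bounded below, and finite expected hitting time follows by the standard finite-state argument. No variant function and no supermartingale are needed, and the possible regressions you flag are harmless because no monotone quantity is required.

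Two of your three ``delicate points'' therefore dissolve. Your first point rests on a misconception: the symmetric configurations are parameterized by hole lengths, but those lengths are at most $n$, so there are only finitely many of them; there is no infinite family over which uniformity must be established. Your second point (designing a supermartingale robust to regressions) is unnecessary once finiteness is used. Your third point (the scheduler padding with idle activations) is handled in the paper exactly as you would handle it, by invoking fairness; the paper is no more careful there than you would need to be. One small inaccuracy in your outline: from the fully symmetric sub-case of type~(d), a single move does not land in type~(a) or~(b) but in one of the asymmetric sub-cases of~(d) (still four isolated robots, now with a unique longest hole or with three robots bordering longest holes); the paper's proof then chains those deterministic sub-cases down to types~(a)--(c).
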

\begin{proof}
Any initial configuration contains no tower. If the initial configuration contains a $4$\mbox{-}segment, the lemma trivially holds. Consider now the case where the initial configuration contains neither a $4$\mbox{-}segment nor a tower.

By Lemma \ref{lem:notower}, while the system does not reaches a configuration containing a $4$\mbox{-}segment, the system remains in configurations containing no tower. For a given $n$-size ring network, the number of such configuration is \emph{finite}. So, to prove the lemma, we have just to show that from any configuration containing neither a $4$\mbox{-}segment nor a tower, there is always a strictly positive probability that the system eventually reaches a configuration containing a $4$\mbox{-}segment (despite the choices of the scheduler). To see this, consider a configuration $\gamma$ containing neither a $4$\mbox{-}segment nor a tower and split the study into the following cases:
\begin{enumerate}
\item\label{one} \emph{$\gamma$ contains a $3$\mbox{-}segment}. In this case, only the unique isolated robot can move and by the fairness property, it eventually does: it moves toward the $3$\mbox{-}segment through the shortest hole (see Line \ref{line:1}). So, until the system reaches a configuration containing a $4$\mbox{-}segment, only the isolated robot moves and at each move the length of the shortest hole decreases. Hence, the system reaches a configuration containing a $4$\mbox{-}segment \emph{in finite time}.
\item\label{unique} \emph{$\gamma$ contains a unique $2$\mbox{-}segment}. Following the same scheme as in the previous case, we can see that the system reaches a configuration containing a $4$\mbox{-}segment \emph{in finite time}.
\item\label{two} \emph{$\gamma$ contains two $2$\mbox{-}segments}. In this case, the robots that are neighbors of a longuest hole (at least two) can \emph{try} to move (see Line \ref{line:3}). So, by fairness property, a non-empty set of these robots, say $S$, is eventually activated by the scheduler. Now, every robot in $S$ decides with a uniform probability to move or not. So, there a strictly positive probability that only one robot in $S$ decides to move. In this case, we retreive the previous case and we are done.
\item\label{onelmax} \emph{$\gamma$ contains four isolated nodes}. Let $l_{max}$ be the length of the longuest hole in $\gamma$. Let study the following subcases:
\begin{enumerate}
\item\label{twolmax} \emph{Only two robots are neighbors of a $l_{max}$\mbox{-}hole}. In this case, the two robots that are neighbors of the unique $l_{max}$\mbox{-}hole can move. So, by fairness property, either one or both of them eventually move through their shortest neighboring hole  (see Line \ref{line:6}). After such moves, either $(i)$ the system is still in a configuration containing four isolated nodes and where two robots are neighbors of a unique longuest hole but the size of the longuest hole increased, or $(ii)$ the system is in a configuration containing a unique $2$\mbox{-}segment, or $(iii)$ the system is in a configuration containing two $2$\mbox{-}segments. Hence, the system reaches in finite time a configuration satisfying $(ii)$ or $(iii)$, {\em i.e.}, we eventually retreive the cases \ref{unique} or \ref{two}, and we are done. 
\item\label{threelmax} \emph{Exactly three robots are neighbors of a $l_{max}$\mbox{-}hole}. Let $\mathcal R_0$ be the robot that is not neighbor of a $l_{max}$\mbox{-}hole.  Let $\mathcal R_1$ and $\mathcal R_2$ be the two robots that are neighbor of exactly one $l_{max}$\mbox{-}hole. In this case, only $\mathcal R_1$ and $\mathcal R_2$ can move (see Line \ref{line:5}) and by fairness property at least one of them eventually does. If only one of them moves, then we retreive Subcase 4.(a) or Case \ref{unique}, and we are done. If both $\mathcal R_1$ and $\mathcal R_2$ move, then the system reaches $(i)$ either a configuration where exactly three robots are neighbors of a longuest hole of length $l_{max}+1$ or $(ii)$ a configuration containing a $3$\mbox{-}segment. In Case $(i)$, if we repeat the argument, we can see that we eventually retreive Subcase 4.(a), Case \ref{one}, or Case \ref{unique}, and we are done. In Case $(ii)$, we directly retreive Case \ref{one} and we are done.
\item \emph{The four robots are neighbors of a $l_{max}$\mbox{-}hole}. In this case, the configuration is symmetric and all robots \emph{try} move (see Line \ref{line:6}). Now, despite the choice of the scheduler, there is a strictly positive probability that only one robot probabilistically decides to move. In this case, the robot moves through one of its neighboring $l_{max}$\mbox{-}hole of size at least two (to provide the tower creation). As a consequence, we retreive Subcases 4.(a) or 4.(b) and we are done.
\end{enumerate}
\end{enumerate}
Hence, in all cases there is a strictly positive probability that the system eventually reaches a configuration containing a $4$\mbox{-}segment from $\gamma$ and the lemma holds.
\end{proof}

\subsubsection{Phase $\mathrm{II}$}

Phase $\mathrm{II}$ is described in Algorithm \ref{algo:P2}: Starting from a configuration where there is a $4$-segment on nodes $u_i,u_{i+1},u_{i+2},u_{i+3}$, the system eventually reaches a configuration where a primary arrow is formed on nodes $u_i,u_{i+1},u_{i+2},u_{i+3}$. To that goal, we proceed as follows: Let $\mathcal R_1$ and $\mathcal R_2$ be the robots located at the nodes $u_{i+1}$ and $u_{i+2}$ of the $4$-segment. $\mathcal R_1$ and $\mathcal R_2$ try to move to $u_{i+2}$ and $u_{i+1}$, respectively. Eventually only one of these robots moves and we are done, as proven in the two next lemmas.

\begin{algorithm*}
\scriptsize
\begin{algorithmic}[1] 
\If{I am not located at an extremity of the $4$-segment}
\State \TRY toward my neighboring node that is not an extremity of the $4$-segment;
\EndIf
\end{algorithmic}
\caption{Procedure $Phase\ \mathrm{II}$. \label{algo:P2}}
\end{algorithm*}

\begin{lemma}\label{segment}
Let $\gamma$ be a configuration containing a $4$\mbox{-}segment $u_i,u_{i+1},u_{i+2},u_{i+3}$. If $\gamma$ is the configuration at instant $t$, then the configuration at instant $t+1$ is either identical to $\gamma$ or the configuration containing the primary arrow $u_i,u_{i+1},u_{i+2},u_{i+3}$.
\end{lemma}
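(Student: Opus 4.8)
The plan is to observe that in the configuration $\gamma$ only the two robots sitting on the interior nodes $u_{i+1}$ and $u_{i+2}$ of the $4$-segment are ever able to act, and then to enumerate the possible joint outcomes of their (probabilistic) activations, checking each one against the statement. First I would justify the detection of the situation: since $n > 8$, the $4$-segment $u_i,u_{i+1},u_{i+2},u_{i+3}$ is the unique segment of $\gamma$ and is unambiguous, because its two external neighbors $u_{i-1}$ and $u_{i+4}$ are free. Hence every robot correctly recognizes a configuration with a $4$-segment and executes Procedure $Phase\ \mathrm{II}$ (Algorithm~\ref{algo:P2}). By that procedure, the two extremity robots, at $u_i$ and $u_{i+3}$, stay idle, whereas the interior robots $\mathcal{R}_1$ (at $u_{i+1}$) and $\mathcal{R}_2$ (at $u_{i+2}$) \emph{try to move} toward their non-extremity neighbor, namely $\mathcal{R}_1$ toward $u_{i+2}$ and $\mathcal{R}_2$ toward $u_{i+1}$.

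I would then split on how many of $\mathcal{R}_1,\mathcal{R}_2$ \emph{effectively} move between instants $t$ and $t+1$ (an interior robot effectively moves only if it is activated by the scheduler and wins its coin toss). If neither moves, the multiplicities are unchanged and the configuration at $t+1$ equals $\gamma$. If exactly one moves, say $\mathcal{R}_1$, then $u_{i+1}$ becomes free and $u_{i+2}$ carries a tower of two robots, while $u_i$ and $u_{i+3}$ still each carry one robot; matching this against the definition of an arrow, with $u_i$ as tail, $u_{i+3}$ as head, and the tower on $u_{k-1}=u_{i+2}$ (so $k=i+3$), one gets exactly a primary arrow on $u_i,u_{i+1},u_{i+2},u_{i+3}$. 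The symmetric subcase where only $\mathcal{R}_2$ moves yields, by the same verification, a primary arrow on the same four nodes (its mirror image, with the tower on $u_{i+1}$). Finally, if both $\mathcal{R}_1$ and $\mathcal{R}_2$ move, they exchange nodes $u_{i+1}$ and $u_{i+2}$; since robots are anonymous and a configuration is determined solely by its multiplicity vector $(d_0,\dots,d_{n-1})$, this swap leaves $\gamma$ unchanged, so the configuration at $t+1$ again equals $\gamma$.

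The step I expect to require the most care is the simultaneous-move subcase, because it is the only place where the atomicity and anonymity assumptions are genuinely used: both robots compute their destinations from the \emph{same} snapshot $\gamma$, and their exchange would produce a distinct configuration if the robots could be told apart. I would therefore make explicit that atomicity forces both destinations to be computed from $\gamma$, and that anonymity collapses the post-swap state back onto $\gamma$. Everything else is a direct reading of Algorithm~\ref{algo:P2} together with the elementary check against the arrow definition, so the full argument is short.
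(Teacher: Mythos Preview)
Your proposal is correct and follows essentially the same approach as the paper's own proof: identify that in $\gamma$ only the two interior robots $\mathcal{R}_1$ and $\mathcal{R}_2$ can act under Procedure $Phase\ \mathrm{II}$, with respective destinations $u_{i+2}$ and $u_{i+1}$, and then observe that the resulting configuration is either $\gamma$ itself or a primary arrow on $u_i,u_{i+1},u_{i+2},u_{i+3}$. Your case enumeration (neither moves, exactly one moves, both swap) is merely a more explicit unpacking of what the paper states in one sentence, and your remarks on atomicity and anonymity in the swap case make precise a point the paper leaves implicit.
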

\begin{proof}
Let $\mathcal R_1$ (resp. $\mathcal R_2$) be the robot located at node $u_{i+1}$ (resp. $u_{i+2}$) in $\gamma$. In $\gamma$, all robots executes Algorithm \ref{algo:P2} (see Algorithm \ref{algo:main}). So, from $\gamma$, only $\mathcal R_1$ and $\mathcal R_2$ can move: $\mathcal R_1$ can move to node $u_{i+2}$ and $\mathcal R_2$ can move to node $u_{i+1}$ (see Algorithm \ref{algo:P2}). When one or both of these robots, we obtain a configuration containing either a $4$\mbox{-}segment or a primary arrow in $u_i,u_{i+1},u_{i+2},u_{i+3}$ and the lemma holds.
\end{proof}

\begin{lemma}\label{lem:p2} From a configuration containing a $4$\mbox{-}segment, the system reaches a configuration containing a primary arrow in finite expected time.
\end{lemma}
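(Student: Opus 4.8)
The plan is to combine Lemma \ref{segment} with a bounded-below probability of success at each relevant activation, following the same template used for Lemma \ref{lem:p1}. First I would fix the $4$-segment $u_i,u_{i+1},u_{i+2},u_{i+3}$ present at the starting instant and invoke Lemma \ref{segment}: from this configuration, every step leads either back to the very same $4$-segment (on the same nodes) or to the primary arrow on $u_i,u_{i+1},u_{i+2},u_{i+3}$. Consequently, as long as the primary arrow has not been formed, the system never leaves this single configuration; in particular the set of configurations visited before absorption is finite (it is a singleton up to indistinguishability). It therefore suffices, exactly as in the proof of Lemma \ref{lem:p1}, to exhibit a strictly positive lower bound on the probability of reaching the primary arrow and then to conclude finite expected time.

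Next I would analyse the activations allowed by Algorithm \ref{algo:P2}. Only the two inner robots $\mathcal{R}_1$ (at $u_{i+1}$) and $\mathcal{R}_2$ (at $u_{i+2}$) may act, each one trying to move onto the node held by the other; the two extremity robots stay idle, so any step activating only extremities leaves the configuration unchanged. By fairness of the scheduler, there are infinitely many instants at which at least one of $\mathcal{R}_1,\mathcal{R}_2$ is activated. At such an instant I would inspect the coin tosses: if a single inner robot is activated, it moves with probability $1/2$, and any such move produces the primary arrow; if both inner robots are activated, then with probability $1/2$ exactly one of them wins its toss and moves, yielding the primary arrow, while with the complementary probability either both move, swapping their positions and thus returning to the identical $4$-segment, or neither moves. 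In every case, conditioned on at least one inner robot being activated, the primary arrow is reached with probability at least $1/2$.

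Finally I would assemble these facts. Since the system remains at the single $4$-segment configuration until the primary arrow appears, and since each of the infinitely many relevant activations guaranteed by fairness independently yields the primary arrow with probability at least $1/2$, the number of relevant activations needed is dominated by a geometric variable of expectation at most $2$; hence the primary arrow is reached with probability $1$ and in finite expected time. I expect the main obstacle to be the synchronous adversary: when the scheduler activates $\mathcal{R}_1$ and $\mathcal{R}_2$ together, a purely deterministic rule would let them swap forever and never break the symmetry of the $4$-segment. The crux of the argument is thus to confirm that the randomized \emph{try to move} of Algorithm \ref{algo:P2} still gives success probability bounded away from $0$ in this simultaneous case, which is precisely why coin tosses are used here rather than a deterministic move.
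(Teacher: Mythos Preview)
Your proposal is correct and follows essentially the same approach as the paper's own proof: invoke Lemma~\ref{segment} to confine the system to a single configuration until absorption, observe that only the two inner robots $\mathcal{R}_1$ and $\mathcal{R}_2$ may act, appeal to fairness to guarantee that at least one of them is eventually activated, and argue that with strictly positive probability exactly one of them decides to move, yielding the primary arrow. Your version is in fact more explicit than the paper's, since you compute the lower bound $1/2$ on the success probability per relevant activation and spell out the geometric-dominance argument for finite expected time, whereas the paper simply asserts ``strictly positive probability'' and concludes.
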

\begin{proof}
By Lemma \ref{segment}, we know that starting from a configuration $\gamma$ containing a $4$\mbox{-}segment, the system either remains in the same configuration or reaches a configuration containing a primary arrow. Let $\mathcal R_1$ and $\mathcal R_2$ be the robots that are not located at the extremity of the $4$\mbox{-}segment in $\gamma$. Only $\mathcal R_1$ and $\mathcal R_2$ can (probabilistically) decide to move in $\gamma$. Also, by the fairness property, eventually one or both of them are activated. Now, despite the choice of the scheduler, there is  a strictly positive probability that only one of them probabilistically decide to move: in this case, the system reaches a configuration containing a primary arrow  (see Algorithm \ref{algo:P2}) and we are done.
\end{proof}

\subsubsection{Phase $\mathrm{III}$}

Phase $\mathrm{III}$ is described in Algorithm \ref{algo:P3}. This phase is fully deterministic: Let $\mathcal H$ be the hole between the tail and the head of arrow. The robot located at the arrow tail traverses $\mathcal H$. When it is done, the system is in a terminal configuration containing a final arrow: all nodes have been visited as shown is the theorem below.

\begin{algorithm*}[htpb]
\scriptsize
\begin{algorithmic}[1] 
\If{I am the arrow tail}
\State \MOVE toward the arrow head through the hole having me and the arrow head as neighbor;
\EndIf
\end{algorithmic}
\caption{Procedure $Phase\ \mathrm{III}$. \label{algo:P3}}
\end{algorithm*}

\begin{theorem}\label{theo:p3}
Algorithm \ref{algo:main} is a probabilistic exploration protocol for {\bf 4} robots in a ring of $\boldsymbol{n>8}$ nodes.
\end{theorem}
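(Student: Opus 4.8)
The plan is to establish that Algorithm~\ref{algo:main} satisfies the two requirements of a probabilistic exploration protocol: termination in finite expected time, and the visiting of every node. The overall strategy follows the three-phase structure of the protocol, chaining together the lemmas already proven for each phase. The key observation is that the algorithm's top-level structure (Algorithm~\ref{algo:main}) partitions every non-terminal configuration into exactly one of three mutually exclusive categories --- those containing neither an arrow nor a $4$-segment (Phase~$\mathrm{I}$), those containing a $4$-segment (Phase~$\mathrm{II}$), and those containing an arrow (Phase~$\mathrm{III}$) --- so the robots always behave according to a well-defined phase, and once a phase's target structure is reached, the configuration belongs to the next phase's domain.

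First I would argue termination by composing the three phases. By Lemma~\ref{lem:p1}, starting from any towerless initial configuration the system reaches a configuration containing a $4$-segment in finite expected time. From such a configuration, Lemma~\ref{lem:p2} guarantees that a configuration containing a primary arrow is reached in finite expected time. It then remains to treat Phase~$\mathrm{III}$: I must show that from a primary arrow the system reaches a final arrow, and I would do this by arguing that Phase~$\mathrm{III}$ is deterministic and monotone. The crucial sub-claim here is that Algorithm~\ref{algo:P3} preserves the arrow invariant at each step: when the arrow tail traverses the hole $\mathcal H$ toward the head, the resulting configuration is again an arrow, but of size one greater (the tail advances by one free node while the two-robot tower at $u_{k-1}$ and the head at $u_k$ persist). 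Since an arrow's size is bounded above by $n-4$, and each activation of the tail strictly increases the size, by the fairness property the tail is activated repeatedly and the final arrow (size $n-4$) is reached in finitely many steps.

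Second I would establish the coverage requirement, namely that every node is visited by termination. A final arrow occupies $u_i$, $u_{k-1}$ (with a tower), and $u_k$ with all intermediate nodes free, and by definition its size $n-4$ means the arrow path spans the entire ring; I would argue by tracking node visits across the phases. During Phase~$\mathrm{III}$ the tail sweeps across every free node of the hole $\mathcal H$ as it advances from the primary arrow to the final arrow, so every node the tail passes over is visited. Combined with the nodes already occupied when the primary arrow forms, this accounts for all $n$ nodes. I would also verify that the final-arrow configuration is genuinely terminal: in it no robot is the tail of a further-extendable arrow (the hole between tail and head has shrunk to nothing beyond the mandatory structure), so Algorithm~\ref{algo:main}'s guard at line~\ref{firstline} fails and no robot moves with positive probability.

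The main obstacle I anticipate is the careful bookkeeping in Phase~$\mathrm{III}$: proving that the arrow invariant is exactly preserved under a single tail move --- in particular that the tower of two robots remains at $u_{k-1}$ and does not get disturbed, that no spurious tower is created, and that the notion of ``the hole $\mathcal H$ having the tail and head as neighbors'' remains well-defined as the arrow grows. One must check that the tail always has an unambiguous deterministic destination (the configuration containing an arrow is asymmetric once the two-robot tower breaks the mirror symmetry, so no adversarial edge choice arises) and that the size strictly increases by one per move rather than skipping or stalling. A secondary subtlety is confirming that the three phase-domains are indeed exhaustive and non-overlapping for towerless-or-arrow configurations reachable by the protocol, and that the transitions between phases do not momentarily produce a configuration outside all three categories; handling the boundary between Phase~$\mathrm{II}$ and Phase~$\mathrm{III}$ (the instant a $4$-segment becomes a primary arrow) requires confirming the primary arrow configuration falls cleanly into Phase~$\mathrm{III}$'s domain.
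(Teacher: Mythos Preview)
Your proposal is correct and follows essentially the same approach as the paper's own proof: the paper also establishes two claims for Phase~$\mathrm{III}$ (that a final arrow is terminal, and that each tail move increments the arrow size by one), then chains Lemmas~\ref{lem:p1} and~\ref{lem:p2} with these claims to obtain termination, and argues coverage by noting that the four nodes of the primary arrow are already visited while the tail subsequently sweeps the remaining $n-4$ nodes along the path $u_{i-1},\dots,u_{i-n+4}$. Your additional concerns about the arrow invariant, the asymmetry of arrow configurations, and the disjointness of the phase domains are valid points of care but are not treated explicitly in the paper; they do not change the overall structure of the argument.
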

\begin{proof}
The proof of the theorem is based on the two following claims:
\begin{enumerate}
\item\label{term} {\em Any configuration containing a final arrow is terminal}.

{\bf Proof:} Immediate, see Line \ref{firstline} of Algorithm \ref{algo:main}.
\item\label{inc} {\em From a configuration containing a non-final arrow of length $x$, the system eventually reaches a configuration containing a $x+1$\mbox{-}arrow.} 

{\bf Proof:} In such a configuration, only the arrow tail can move. By the fairness property, the robot located at the arrow tail moves \emph{in finite time}: it moves through its neighboring hole having the arrow head as other neighbor (see Algorithm \ref{algo:P3}). As a consequence, the size of the arrow is incremented to $x+1$ and we are done.
\end{enumerate}
Using the two previous claims, we now prove the lemma in two step:
\begin{itemize}
\item {\bf Termination.} {\em Any computation of Algorithm \ref{algo:main} terminates in finite expected time.} 

{\bf Proof:} Immediate from Lemmas \ref{lem:p1}, \ref{lem:p2}, Claims \ref{term} and \ref{inc}.
\item {\bf Partial Correctness.} {\em When a computation of Algorithm \ref{algo:main} terminates, any node has been visited.} 

{\bf Proof:} By Lemma \ref{lem:p1}, starting from any initial configuration, the system reaches in finite expected time a configuration containing a $4$\mbox{-}segment say $u_i,u_{i+1},u_{i+2},u_{i+3}$. By Lemmas \ref{segment} and \ref{lem:p2}, from this configuration the system reaches in finite expected time a configuration containing an arrow on $u_i,u_{i+1},u_{i+2},u_{i+3}$. Hence, when the phase $\mathrm{III}$ starts, nodes $u_i$, $u_{i+1}$, $u_{i+2}$, and $u_{i+3}$ are already visited. By Claim \ref{inc}, the robots executes then Algorithm \ref{algo:P3} until the computation terminates. Let $\mathcal P$ be the path $u_{i-1},\dots,u_{i-n+4}$. By Claim \ref{inc}, until the computation terminated, only the robot located at the arrow tail can move and it it move following $\mathcal P$. Hence, when the computation terminates all nodes of $\mathcal P$ have been visited ({\em i.e.}, nodes $u_{i-1}$, \dots, $u_{i-n+4}$) and, as nodes $u_i$, $u_{i+1}$, $u_{i+2}$, $u_{i+3}$ have also been visited, we are done. 
\end{itemize}
\end{proof}

\section{Conclusion}
\label{sec:conclusion}

We provided evidence that for the exploration problem in uniform rings, randomization could shift complexity from $\Theta(\log n)$ to $\Theta(1)$. While applying randomization to other problem instances is an interesting topic for further research, we would like to point out immediate open questions raised by our work:

\begin{enumerate}

\item Though we were able to provide a general algorithm for any $n$ (strictly) greater than eight, it seems that ad hoc solutions have to be designed when $n$ is between five and eight (included). 

\item Our protocol is optimal with respect to the number of robots. However, the efficiency (in terms of exploring time) is only proved to be finite. Actually computing the convergence time from our proof argument is feasible, but it would be more interesting to study how the number of robots relates to the time complexity of exploration, as it seems natural that more robots will explore the ring faster.

\end{enumerate}

\bibliographystyle{plain} 
\bibliography{../../../biblio/biblio}

\end{document}